\definecolor{blue}{RGB}{0,50,200}
\definecolor{green}{RGB}{92,183,51}
\title{An Improved Algorithm For Online Min-Sum Set Cover}
\author[1]{Marcin Bienkowski}
\author[2]{Marcin Mucha}
\affil[1]{Institute of Computer Science, University of Wrocław, Poland}
\affil[2]{Institute of Informatics, University of Warsaw, Poland}
\date{}
\newcommand{\shortcite}[1]{\cite{#1}}
\newtheorem{theorem}{Theorem}
\newtheorem{lemma}[theorem]{Lemma}
\newtheorem{corollary}[theorem]{Corollary}
\newtheorem{observation}[theorem]{Observation}
\newtheorem{fact}[theorem]{Fact}
\newcommand{\OPT}{\normalfont{\textsc{Opt}}\xspace}
\newcommand{\ALG}{\normalfont{\textsc{Alg}}\xspace}
\newcommand{\LM}{\normalfont{\textsc{Lma}}\xspace}
\newcommand{\OPTH}{\ensuremath{\OPT^\mathrm{E}}\xspace}
\newcommand{\OPTS}{\ensuremath{\OPT^\mathrm{S}}\xspace}
\newcommand{\ALGH}{\ensuremath{\ALG^\mathrm{E}}\xspace}
\newcommand{\ALGS}{\ensuremath{\ALG^\mathrm{S}}\xspace}
\newcommand{\OFFH}{\ensuremath{\textsc{Off}^\mathrm{E}}\xspace}
\newcommand{\MTFS}{\ensuremath{\textsc{Mtf}^\mathrm{S}}\xspace}
\newcommand{\piMTF}{\ensuremath{\pi^\mathrm{MTF}}}
\newcommand{\I}{\mathcal{I}}
\newcommand{\IS}{\ensuremath{\mathcal{I}^\mathrm{S}}}
\newcommand{\JS}{\ensuremath{\mathcal{J}^\mathrm{S}}}
\newcommand{\IH}{\ensuremath{\mathcal{I}^\mathrm{E}}}
\newcommand{\swap}{\normalfont{\textsc{swap}}}
\newcommand{\fetch}{\normalfont{\textsc{fetch}}}
\newcommand{\size}{\normalfont{\textsc{size}}}
\newcommand{\pt}{\normalfont{\textsc{cp}}}
\newcommand{\E}{\mathbf{E}}
\newcommand{\fa}{\ensuremath{\alpha}}
\newcommand{\fb}{\ensuremath{\beta}}
\newcommand{\fc}{\ensuremath{\gamma}}
\newcommand{\fk}{\ensuremath{\kappa}}
\begin{document}

\maketitle

\begin{abstract}
We study a fundamental model of online preference aggregation, where an
algorithm maintains an ordered list of $n$ elements. An input is a stream of
\emph{preferred sets} $R_1, R_2, \dots, R_t, \dots$. Upon seeing $R_t$ and
without knowledge of any future sets, an algorithm has to \emph{rerank elements}
(change the list ordering), so that at least one element of $R_t$ is found near
the list front. The incurred cost is a sum of the list update costs (the number
of swaps of neighboring list elements) and access cost (the position of the
first element of $R_t$ on the list). This scenario occurs naturally in
applications such as ordering items in an online shop using aggregated
preferences of shop customers. The theoretical underpinning of this problem is
known as Min-Sum Set~Cover.

Unlike previous work (Fotakis et al., ICALP 2020, NIPS 2020) that mostly studied
the performance of an online algorithm \ALG in comparison to the \emph{static}
optimal solution (a~single optimal list ordering), in this paper, we study an
arguably harder variant where the benchmark is the provably stronger optimal
\emph{dynamic} solution \OPT (that may also modify the list ordering). In terms
of an online shop, this means that the aggregated preferences of its user base
evolve with time. We construct a computationally efficient randomized algorithm
whose competitive ratio (\ALG-to-\OPT cost ratio) is $O(r^2)$ and prove the
existence of a deterministic $O(r^4)$-competitive algorithm. Here, $r$ is the
maximum cardinality of sets $R_t$. This is the first algorithm whose ratio does
not depend on $n$: the previously best algorithm for this problem was $O(r^{3/2}
\cdot \sqrt{n})$-competitive and $\Omega(r)$ is a lower bound on the performance
of any deterministic online algorithm.
\end{abstract}


\section{Introduction}

We focus on the problem of maintaining an ordered (ranked) list of elements and
updating the order to better reflect the preferences of users. This problem
occurs naturally in an~online shop that has to present a list of items in some
order to users. Items that a user is interested in should be placed sufficiently
close to the list beginning: otherwise, a user has to scroll down, which could
degrade the overall experience and reduce customer retention. Similar phenomena
occur not only in online shopping~\cite{DeGoMM20}, but also in ordering results
from a web search for a given keyword~\cite{DwKuNS01,AgBrDu18}, or ordering news
and advertisements.

In the Min-Sum Set Cover (MSSC) problem, which serves as a theoretical model for
this problem, there is a universe $U$ of $n$ elements, and a set of $m$ users,
where the $t$-th user has a set of preferred elements $R_t \subseteq U$. The
goal is to find a fixed permutation of $n$ elements, which minimizes the sum of
users' dissatisfaction, where the dissatisfaction of user $t$ is the position of
the first element from $R_t$ in the permutation. This measures (in a perhaps
simplistic way) how far a user has to scroll till an interesting item is found.
This problem and its variants have been thoroughly studied in the approximation
algorithms community: the best polynomial-time solution is a
$4$-approximation~\cite{BaBHST98} and this approximation factor is tight unless
$\textsf{P} = \textsf{NP}$~\cite{FeLoTe04}.

In this paper, we study \emph{an online variant} of the MSSC problem, where an
algorithm has to maintain a permutation of $n$ elements, and preferred sets
appear in an online manner. Upon seeing a set $R_t$, an algorithm (i) first has
to pay an access cost equal to the position of the first element from $R_t$;
(ii) may reorder the list arbitrarily, paying the Kendall tau distance between
old and new permutation (minimal number of swapped adjacent
elements).\footnote{A careful reader may observe that the same cost measure is
applied both to accessing the first element and to swapping two adjacent list
elements. This choice is made to make the model coherent with the previous
papers. However, one can easily set the swapping cost to an arbitrary constant:
by the expense of small constant factors, these variants are reducible to our
problem using standard rent-or-buy approaches~\cite{KaMaMO94}.} This setting
captures a~frequent case where the service (e.g., a shop) is learning user
preferences on the fly and has to react accordingly, without knowing the
preferences of future users.


\subsection{Competitive ratio}

To measure the effectiveness of online algorithms we use one of the standard
yardsticks, namely competitive analysis~\cite{BorEl-98}, and we compare the
overall cost of an~online algorithm \ALG to an~optimal (offline) solution \OPT
on the same input instance $\I$. We emphasize that we compare our algorithm to
an optimal solution that \emph{can also change the permutation dynamically}. 

An algorithm \ALG is $c$-competitive if there exists $\xi$ such that for any
input instance $\I$, it holds that $\ALG(\I) \leq c \cdot \OPT(\I) + \xi$. The
competitive ratio of $\ALG$ is the infimum of values of $c$, for which \ALG is
$c$-competitive. For randomized algorithms, we replace the cost $\ALG(\I)$ with
its expected value $\E[\ALG(\I)]$, where the expectation is taken over random
choices of \ALG.


\subsection{Previous and our results}

Fotakis et al.~\shortcite{FoKKSV20} studied the online MSSC problem and
constructed an online $O(r^{3/2} \cdot \sqrt{n})$-competitive algorithm
\textsc{Move-All-Equally} (\textsc{Mae}), where $r$ is the maximum cardinality
of requested sets $R_t$. Their solution, for the requested set $R_t$, computes
the position $d$ of the first element from $R_t$ in the current permutation and
moves all elements $d-1$ positions towards the list front. They showed that the
competitive ratio of any deterministic algorithm cannot be lower than
$\Omega(r)$ and proved that many natural algorithms based on the
\emph{move-to-front} heuristic have a competitive ratio of $\Omega(n)$. They
also asked whether attaining a competitive ratio which is only a~function of $r$
is possible. (Note that $r \ll n$ for most practical applications).

We answer this question affirmatively, providing a novel approach, which allows
us to construct a randomized $O(r^2)$-competitive algorithm. Moreover, our
result holds even if each set $R_t$ is chosen by an adversary based on the
current state of the algorithm's list, i.e., holds also against
\emph{adaptive-online} adversaries~\cite{BorEl-98}. This implies the
\emph{existence} of a deterministic $O(r^4)$-competitive
algorithm~\cite{BeBKTW94}. We note that using the techniques of Ben-David et
al.~\shortcite{BeBKTW94}, it is possible to construct such a deterministic
algorithm. The comparison of old and new results is given in Table~\ref{tab:results}.

On the technical level, to solve the problem, in Section~\ref{sec:caching} we
introduce the Exponential Caching~(EC) problem and show that the MSSC problem
reduces to EC with the loss of constant factors. Glossing over details, EC
treats the list as being split into chunks of geometrically growing sizes and
captures the intuition that --- neglecting constant factors --- the costs depend
only on the chunk index. In Section~\ref{sec:algorithm}, we provide an algorithm
for the EC problem. To circumvent the lower bound for the algorithm MAE, we move
only the element that is closest to the list front, and we increase the budgets
of the remaining elements in the requested set instead. Once the budgets become
sufficient to pay for the element movement, the respective elements are moved to
the first chunk.


\subsection{Related work: static optimality}

The online variant of the MSSC problem has been also considered in an easier
setting where an online algorithm is compared to a \emph{static} optimal
solution that has to stick to a single permutation for the whole
runtime~\cite{FoKKSV20}. We emphasize that this variant differs from the
\emph{online learning} setting; that is, we assume that an online algorithm is
still charged for changing its permutation.

The static model forfeits optimization possibilities that occur when the
preferences of the user base are evolving (e.g., due to influences from
advertisements or because of seasonality). It is also worth mentioning that the
costs of static and dynamic optimal solutions can differ by a factor
of~$\Omega(n)$~\cite{FoKKSV20}. 

The randomized $O(1)$-competitive solution follows by combining multiplicative
weight updates (\textsc{Mwu})~\cite{LitWar94,ArHaKa12} with the techniques of
Blum and Burch~\shortcite{BluBur00} designed for the metrical task systems. This
approach has been derandomized by Fotakis et al.~\shortcite{FoKKSV20}, who gave
a~deterministic solution with an asymptotically optimal ratio of $\Theta(r)$.

\begin{table}
\centering
\begin{tabular}{ r|c|c|c } 
  &  & dynamic $\OPT$ & static $\OPT$ \\ 
\hline
\LM & det & $O(r^4)$ & $^\ast\,O(r^4)$ \\ 
\LM & rand & $O(r^2)$ & $^\ast\,O(r^2)$  \\ 
\hline
 \textsc{Mae} & det & $O(r^{3/2} \cdot \sqrt{n})$ & $2^{O(\sqrt{\log n \cdot \log r}})$ \\
 \textsc{Mwu} derand. & det & ? & $O(r)$ \\ 
 Lower bound & det & $^\ast\,\Omega(r)$ & $\Omega(r)$ \\
 \textsc{Mwu} & rand & ? & $O(1)$ \\ 
\hline
\end{tabular}
\caption{Competitive ratios of old algorithms and algorithm \LM presented in this paper,
against dynamic and static \OPT. Asterisked entries are trivially implied by other ones.}
\label{tab:results}
\end{table}


\subsection{Other related work}

The variant of the problem where all sets $R_t$ are singletons, known as the list
update problem, has been studied in a~long line of work and admits $O(1)$-competitive 
solutions, see~\cite{Kamali16} and references therein. 

Another line of work studied a generalization of the MSSC problem where each set
$R_t$ comes with a~covering requirement $k_t$ and an algorithm is charged for
the positions of the first $k_t$ elements from $R_t$ on the list. (The original
MSSC corresponds to $k_t = 1$ for any $t$). Known solutions include
$O(1)$-approximation (offline)
algorithms~\cite{AzGaYi09,BaGuKr10,SkuWil11,ImSvZw14,BaBaFT21} and
$O(1)$-competitive polynomial-time solution against static optimum without
reordering costs~\cite{FoLiPS20}.

Finally, a large amount of research was devoted to efficiently learning a
permutation with limited feedback, see,
e.g.,~\cite{HelWar09,YaHKTT11,YaHaTT12,SlRaGo13,Ailon14}. While the general aim
is similar to ours, the specific objectives and cost measures make these results
incomparable to ours.


\subsection{Problem definition and notation}

In the Min-Sum Set Cover (MSSC) problem, we are given a universe $U$ of $n$
elements. For the sake of notation, we assume that a permutation of $U$ is given
as a bijective mapping $U \to \{1, \dots, n\}$, returning for any element $x \in
U$ its position on the ordered list. 

An input $\I$ consists of an initial permutation $\pi_0$ of $U$ and a sequence
of $m$ sets $R_1, R_2, \dots, R_m$. Upon seeing set $R_t$, an online algorithm
\ALG is first charged the \emph{access cost} $\min_{x\in R_t} \pi(x)$. Then \ALG
chooses a new permutation $\pi_t$ (possibly $\pi_t = \pi_{t-1}$) paying
\emph{reordering cost} $d(\pi_{t-1}, \pi_{t})$, defined as the number of
inversions between $\pi_{t-1}$ and $\pi_t$. Note that $d(\pi_{t-1}, \pi_t)$ is
also the minimum number of swaps of adjacent elements necessary to change
permutation $\pi_{t-1}$ into $\pi_t$. We emphasize that the choice of~$\pi_t$
made by \ALG has to be performed without the knowledge of future sets $R_{t+1},
R_{t+2}, \dots$ and also without the knowledge of the sequence length $m$. 

In the following, \OPT denotes an optimal offline algorithm. For an input $\I$
and an algorithm $A$, we use $A(\I)$ to denote the total cost of $A$ on $\I$,
and $\ALG(\I,t)$ to denote the cost of $A$ in response to set $R_t$. 
For an integer $j$, we use $[j] = \{0, \ldots, j-1\}$.


\section{Exponential Caching Problem}
\label{sec:caching}

Without loss of generality,  in the MSSC problem, we may assume that the
universe cardinality is $n = 2^w - 1$, where $w \geq 1$ is an integer. To see
this, observe that it is always possible to add dummy elements that are never in
any requested set so that $n$ is of this form; these dummy elements are kept
by \OPT at its list end, and thus they do not increase its cost. After such
modification, the number of elements remains asymptotically the same.


\subsection{Problem definition}
\label{sec:ec_definition}

We now define an Exponential Caching (EC) problem, whose solution will imply the
solution to the MSSC problem of the asymptotically same ratio. In the EC
problem, an algorithm has to maintain a~time-varying partition of elements into
$w$ sets, henceforth called \emph{chunks}, whose sizes are powers of $2$. That
is, an~algorithm has to maintain a partitioning $p : U \to [w]$. We call $p(x)$
the \emph{chunk index} of element $x$ and we say that partitioning $p$ is
\emph{valid} if $|p^{-1}(i)| = 2^i$.  

We define chunks $S^p_0, S^p_1, \dots, S^p_{w-1}$, where $S^p_i = p^{-1}(i)$. We
usually skip $p$ in superscript if it does not lead to ambiguity. For any valid
partitioning $p$ and element $x$, we use
\[
  \size(p,x) = 2^{p(x)}
\]
to denote the cardinality of the chunk containing $x$ in the partitioning $p$.

An input to the EC problem is an initial partitioning $p_0$ and an online sequence of
sets $R_1, R_2, \dots, R_m$. Time is split into $m$ steps, and when set~$R_t$
arrives in step $t$:
\begin{itemize}
\item \ALG pays an \emph{access cost} $\min_{x \in R_t} \size(p_{t-1}, x)$.
\item \ALG chooses a valid partitioning $p_t: U \to [w]$. 
For each element $x$ with $p_t(x) \neq p_{t-1}(x)$, \ALG pays 
a \emph{movement cost} equal to $\max \{ \size(p_{t-1}, x), \size(p_{t}, x) \}$.
\end{itemize}

\begin{theorem}
\label{thm:ec_to_mssc}
If there exists a $c$-competitive (deterministic or randomized) algorithm for
$\ALGH$ for the EC problem, then there exists an $O(c)$-competitive
(deterministic or randomized) algorithm $\ALGS$ for the MSSC problem. 
\end{theorem}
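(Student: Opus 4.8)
The plan is to exhibit an explicit two-way simulation between the EC problem and the MSSC problem, so that an EC solution can be "run" on an MSSC instance and vice versa, with all costs preserved up to constant factors. First I would define, for any valid partitioning $p$ of $U$, a canonical permutation $\pi_p$ that is consistent with $p$: list the (single) element of $S_0^p$ first, then the two elements of $S_1^p$, then the four of $S_2^p$, and so on, so that the elements of chunk $i$ occupy positions $2^i, \dots, 2^{i+1}-1$. The key geometric fact is that for every element $x$, its position $\pi_p(x)$ satisfies $\size(p,x) \le \pi_p(x) < 2\cdot\size(p,x)$, i.e. the MSSC access cost and the EC access cost agree up to a factor of $2$. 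This immediately handles the access-cost side of the reduction in both directions: given the MSSC request $R_t$, the first-element position $\min_{x\in R_t}\pi_{p_{t-1}}(x)$ and the EC access cost $\min_{x\in R_t}\size(p_{t-1},x)$ differ by at most a factor of $2$.

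Next I would relate the reordering cost $d(\pi_{t-1},\pi_t)$ (number of inversions) to the EC movement cost $\sum_{x:\,p_t(x)\neq p_{t-1}(x)}\max\{\size(p_{t-1},x),\size(p_t,x)\}$. In one direction (EC-to-MSSC), given an EC algorithm $\ALGH$ producing partitionings $p_t$, I would have $\ALGS$ maintain the canonical permutation $\pi_{p_t}$; the cost of transforming $\pi_{p_{t-1}}$ into $\pi_{p_t}$ is at most the number of pairs $(x,y)$ whose relative order changes, and such a pair must involve at least one element that changed chunks, each of whose moves traverses $O(\size)$ positions — this is bounded (up to a constant) by the EC movement cost, using the geometric growth of chunk sizes so that the total "displacement" telescopes. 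In the other direction (MSSC-to-EC), given an MSSC algorithm with permutations $\pi_t$, I would define $p_t$ by rounding: $p_t(x)$ is the chunk index of the block of positions containing $\pi_t(x)$. Moving from $\pi_{t-1}$ to $\pi_t$ with $d$ inversions can change each element's position by at most $d$, hence its chunk index changes only if it crosses a block boundary, and a charging/summation argument shows $\sum_x \max\{\size(p_{t-1},x),\size(p_t,x)\}\cdot[p_t(x)\neq p_{t-1}(x)] = O(d(\pi_{t-1},\pi_t))$, again because the blocks grow geometrically so that an element crossing into block $i$ "pays" $\Theta(2^i)$ but must have moved $\Omega(2^i)$ positions, and these position-changes are disjointly accounted for within $d$ inversions.

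With these two cost-equivalences in hand, the theorem follows by the standard reduction chaining: given the $c$-competitive EC algorithm $\ALGH$, define $\ALGS$ to simulate $\ALGH$ and output canonical permutations; then $\ALGS(\I^{\mathrm S}) = O(\ALGH(\I^{\mathrm E}))$ where $\I^{\mathrm E}$ is the EC instance obtained from $\I^{\mathrm S}$ by keeping the same requests and setting $p_0$ from $\pi_0$. Since $\ALGH(\I^{\mathrm E}) \le c\cdot\OPTH(\I^{\mathrm E}) + \xi$, and $\OPTH(\I^{\mathrm E}) = O(\OPTS(\I^{\mathrm S}))$ because the optimal MSSC dynamic solution $\OPTS$ can itself be converted (via the position-rounding map) into a valid EC solution of cost $O(\OPTS(\I^{\mathrm S}))$, we get $\ALGS(\I^{\mathrm S}) = O(c)\cdot\OPTS(\I^{\mathrm S}) + O(\xi)$. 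The randomized case is identical after taking expectations, since the reduction is deterministic on each sample path. The main obstacle I anticipate is the careful bookkeeping in the reordering-cost comparison, particularly the MSSC-to-EC direction: one has to argue that when a permutation changes by $d$ inversions, the induced chunk changes can be charged to those inversions without double-counting, and that elements merely "shifted" by the departure or arrival of others in their block do not incur spurious movement cost — handling this cleanly may require processing the transformation $\pi_{t-1}\to\pi_t$ as a sequence of adjacent swaps and tracking chunk-boundary crossings one swap at a time.
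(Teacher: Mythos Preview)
Your EC-to-MSSC direction (showing $\ALGS(\IS) = O(\ALGH(\IH))$) is essentially right and matches the spirit of the paper's Lemma~\ref{lem:ec_to_mssc_1}; the paper implements the permutation update via a cycle decomposition of the chunk-to-chunk moves and an explicit sequence of $\swap$ operations, but your ``maintain a permutation consistent with $p_t$'' idea works the same way once one is careful not to impose a rigid tie-break inside chunks.

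The gap is in the MSSC-to-EC direction, where you want $\OPTH(\IH) = O(\OPTS(\IS))$ by rounding the permutations of $\OPTS$ to partitionings. Your key claim, that ``an element crossing into block $i$ must have moved $\Omega(2^i)$ positions,'' is false for crossings between \emph{adjacent} chunks: an element can move from position $2^i-1$ to position $2^i$ (one adjacent swap, MSSC cost $1$) and thereby change from chunk $i-1$ to chunk $i$, incurring EC movement cost $2^i$. Processing the transformation one swap at a time does not help, since a single boundary swap still produces two chunk changes of cost $\Theta(2^i)$ each. So the direct rounding of an arbitrary MSSC solution can blow up by an unbounded factor, and your charging argument cannot be repaired by bookkeeping alone.

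The paper circumvents this with an intermediate step you are missing: it first replaces $\OPTS$ by a $2$-approximate \emph{move-to-front based} solution $\MTFS$ (Lemma~\ref{lem:ec_to_mssc_2}, which follows from the classical list-update analysis), and only then rounds $\MTFS$ to an EC solution (Lemma~\ref{lem:ec_to_mssc_3}). The point is that in an MTF-based solution every reordering moves a single element from position $v$ to position $1$, so the MSSC reordering cost is $v-1$, while the induced EC movement cost is $\sum_{j<\lfloor\log_2 v\rfloor} 2^{j+1} + 2^{\lfloor\log_2 v\rfloor} = O(v)$; now the ratio is truly $O(1)$. Without this structural restriction on the offline solution, the rounding step does not go through.
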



\subsection{Canonic partitioning}

To prove Theorem~\ref{thm:ec_to_mssc}, note that MSSC and EC problems are
closely related by a natural transformation from any permutation $\pi$ in the
MSSC problem to a partitioning $p$ in the EC problem: Assume that the elements
are ordered in a list according to~$\pi$. Then, $S^{p}_0$ contains the first
list element, $S^{p}_1$ the next $2^1$ elements, $S^{p}_2$ the next $2^2$
elements, and so on, with $S^{p}_{w-1}$ containing the last $2^{w-1}$ elements.

Formally, any permutation $\pi$ of the MSSC induces a~\emph{canonic
partitioning} $\pt(\pi)$ of the EC problem, defined as 
\begin{align*}
    \pt(\pi)(x) = \lfloor \log_2 \pi(x) \rfloor && \text{for any $x \in U$.}
\end{align*}
Note that for any permutation $\pi$ and element $x$ it holds that
\begin{equation}
\label{eq:pi_vs_part}
    \size(\pt(\pi), x) \leq \pi(x) \leq 2 \cdot \size(\pt(\pi), x) - 1.
\end{equation}

\subsection{Constructing online algorithm}

To show Theorem~\ref{thm:ec_to_mssc}, we need to construct 
an algorithm $\ALGS$ for the MSSC problem on the basis of 
an existing algorithm $\ALGH$ for the EC problem.

Observe that any input $\IS = (\pi_0, R_1, R_2, \dots, R_m)$ to the MSSC problem
has a corresponding input $\IH = (p_0 = \pt(\pi_0), R_1, R_2, \dots, R_m)$ to
the EC problem. To provide a~solution to an input~$\IS$, our algorithm \ALGS
internally executes an algorithm \ALGH on input $\IH$. Once \ALGH responds to
$R_t$ by changing its partitioning from $p_{t-1}$ to $p_t$, \ALGS mimics these
changes by modifying its permutation $\pi_{t-1}$ into $\pi_t$, so that
$\pt(\pi_t) = p_t$. 

As we show below, such a definition together with \eqref{eq:pi_vs_part} guarantees
that the \emph{access} costs of \ALGS and \ALGH are equal up to a factor of $2$.
Note that there are multiple ways of obtaining a~permutation $\pi_t$ satisfying
$\pt(\pi_t) = p_t$. The crux is to show that it is possible to choose $\pi_t$,
so that the \emph{reordering} cost of \ALGS is at most constant times higher
than the \emph{movement} cost of \ALGH.


\begin{lemma}
\label{lem:ec_to_mssc_1}
For any step $t$, 
it is possible to choose $\pi_t$, such that $\pt(\pi_t) = p_t$ and 
$\ALGS(\IS, t) \leq 4 \cdot \ALGH(\IH, t)$.
\end{lemma}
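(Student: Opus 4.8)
The plan is to analyze the two cost components of $\ALGS$ separately and bound each by a constant multiple of the corresponding component of $\ALGH$. For the access cost, the argument is immediate from \eqref{eq:pi_vs_part}: if $x$ is the element of $R_t$ minimizing $\pi_{t-1}(x)$ (equivalently, up to the factor-$2$ slack, minimizing $\size(p_{t-1},x)$), then the access cost of $\ALGS$ is $\min_{x \in R_t}\pi_{t-1}(x) \le 2\cdot\min_{x\in R_t}\size(p_{t-1},x) - 1$, which is at most $2$ times the access cost of $\ALGH$ in step $t$. So the real work is entirely on the reordering side.

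For the reordering cost, the key step is to exhibit, given permutation $\pi_{t-1}$ and the target partitioning $p_t$, a permutation $\pi_t$ with $\pt(\pi_t) = p_t$ whose Kendall tau distance from $\pi_{t-1}$ is at most $2$ times the movement cost $\sum_{x : p_t(x)\neq p_{t-1}(x)} \max\{\size(p_{t-1},x),\size(p_t,x)\}$. I would construct $\pi_t$ as follows. The elements whose chunk index is unchanged between $p_{t-1}$ and $p_t$ should keep their relative order from $\pi_{t-1}$; within each chunk $S^{p_t}_i$, I would place the elements that were already in $S^{p_{t-1}}_i$ first (in their old relative order), followed by the newly arriving elements. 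The number of adjacent swaps needed to realize this target is bounded by counting inversions: each inversion of $\pi_t$ relative to $\pi_{t-1}$ involves at least one element $x$ that changed chunk index, and for such an $x$ the position it moves to, as well as the position it moved from, lies within the union of chunks $0,\dots,\max\{p_{t-1}(x),p_t(x)\}$, which has total size $\sum_{i\le j} 2^i = 2^{j+1}-1 < 2\cdot\max\{\size(p_{t-1},x),\size(p_t,x)\}$. Hence each moved element $x$ contributes fewer than $2\cdot\max\{\size(p_{t-1},x),\size(p_t,x)\}$ inversions, and summing over moved elements (being careful that each inversion has at least one moved endpoint, so charging to that endpoint is legitimate even if both endpoints moved) yields reordering cost at most $2$ times the movement cost.

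Combining the two bounds gives $\ALGS(\IS,t) \le 2\cdot(\text{access cost of }\ALGH) + 2\cdot(\text{movement cost of }\ALGH) \le 4\cdot\ALGH(\IH,t)$, where I use that both the access and movement parts of $\ALGH(\IH,t)$ are nonnegative. The step I expect to be the main obstacle is making the inversion-counting argument airtight: one must argue that it suffices to move each displaced element only within the prefix of chunks up to its maximum chunk index (so that stationary elements in higher chunks are never disturbed), and that no double-counting inflates the bound when an inversion is formed by two elements that both changed chunks. A clean way to handle this is to route the rearrangement element-by-element — process displaced elements in a fixed order, each time sliding the current element to its final slot past only the elements strictly between its old and new positions — and observe that every such intermediate element either is itself displaced or sits in a chunk of index at most the mover's maximum chunk index, so the total number of swaps telescopes into the stated bound.
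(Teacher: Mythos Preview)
Your plan is correct, and in fact with the charging scheme you hint at (charge each inversion to the moved endpoint with the larger value of $\max\{p_{t-1}(\cdot),p_t(\cdot)\}$) the inversion count goes through cleanly: every $y$ charged against $x$ then has both $\pi_{t-1}(y)$ and $\pi_t(y)$ at most $2^{j+1}-1$ where $j=\max\{p_{t-1}(x),p_t(x)\}$, so there are at most $2^{j+1}-2<2\cdot\max\{\size(p_{t-1},x),\size(p_t,x)\}$ of them. This yields reordering cost at most twice the movement cost, and together with the factor~$2$ on access you actually get $\ALGS(\IS,t)\le 2\cdot\ALGH(\IH,t)$, stronger than the stated bound.

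The paper takes a different and more explicitly constructive route. It views the chunk changes as a directed multigraph on chunks (one edge per moved element), observes that validity of $p_{t-1}$ and $p_t$ forces this graph to be Eulerian, decomposes it into edge-disjoint cycles, and then realizes each cycle $x_0\to x_1\to\cdots\to x_{k-1}\to x_0$ by a chain of $k-1$ position transpositions $\swap(v_{k-1},v_{k-2}),\dots,\swap(v_1,v_0)$. Each $\swap$ is bounded by $4\cdot\max\{\size(p_{t-1},x_j),\size(p_t,x_j)\}$ via \eqref{eq:pi_vs_part}, and summing gives the factor~$4$. Your approach defines $\pi_t$ globally and bounds the Kendall tau distance directly; it is arguably cleaner and yields a better constant, but it is less operational (one still has to exhibit a swap sequence realizing $\pi_t$, though of course any sequence of $d(\pi_{t-1},\pi_t)$ adjacent swaps suffices). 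The paper's cycle decomposition, by contrast, hands you the swap sequence explicitly and makes the per-element accounting immediate without any charging argument.
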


\begin{proof}
First, we observe that \ALGS may swap two elements on positions $a \neq b$ using 
$2 \cdot |b-a| - 1$ swaps of adjacent elements, paying 
$2 \cdot |b-a| - 1 < 2 \cdot \max\{a,b\}$.
We use $\swap(a,b)$ to denote both such operation and its cost.

The movements chosen by \ALGH in step $t$ can be expressed by a directed graph,
whose vertices are chunks $S_0, \dots, S_{w-1}$. Each element $x$ that changes
its chunk (from $S_{p_{t-1}(x)}$ to $S_{p_{t}(x)}$) is encoded as a~directed
edge from chunk~$S_{p_{t-1}(x)}$ to $S_{p_{t}(x)}$. As both partitionings
$p_{t-1}$ and $p_{t}$ are valid, the in-degree and out-degree of each vertex are
equal. Thus, the graph can be partitioned into a union of edge-disjoint cycles.
We treat each such cycle separately. For simplicity of the description, we
assume that there is only one such cycle $S_{i_0} \to S_{i_1} \to S_{i_2} \to
\dots \to S_{i_{k-1}} \to S_{i_{k}}$ (where $i_{k} = i_0$). The general case
follows by simply summing over all cycles. 

For any $j \in [k]$, let $x_{j}$ be the element that is moved from 
chunk~$S_{i_j}$ to $S_{i_{j+1}}$; let $v_j = \pi_{t-1}(x_{j})$. To mimic the choices
of~\ALGH, \ALGS executes a sequence of $k-1$ swaps: $\swap(v_{k-1}, v_{k-2})$,
$\swap(v_{k-2}, v_{k-3})$, $\dots$, $\swap(v_2, v_1)$, $\swap(v_1, v_0)$. It is
easy to verify that once these $\swap$ operations are executed, the position of
element $x_j$ becomes equal to $v_{j+1}$ for any $j \in [k-1]$, and the position
of $x_{k-1}$ becomes equal to $v_0$. Thus, the resulting permutation $\pi_t$
satisfies the property $\pt(\pi_t) = p_t$.

To estimate the cost of $\ALGS$, fix any $j \in [k-1]$. By $p_{t-1} =
\pt(\pi_{t-1})$, $p_{t} = \pt(\pi_{t})$, and \eqref{eq:pi_vs_part}, we have 
\begin{align*}
    \swap(v_j, v_{j+1}) 
    & = \swap(\pi_{t-1}(x_j), \pi_{t}(x_j)) \\
    & < 2 \cdot \max \{ \pi_{t-1}(x_j), \pi_{t}(x_j) \} \\
    & < 4 \cdot \max \{ \size(p_{t-1},x_j), \size(p_{t}, x_j) \} \\
    & = 4 \cdot \ALGH(\IH,t,x_j),
\end{align*}
where $\ALGH(\IH,t,x_j)$ is the movement cost of $x_j$ in step~$t$. 

Let $s$ be the element from $R_t$ which is the earliest on the list for permutation
$\pi_{t-1}$. By~\eqref{eq:pi_vs_part} and $p_{t-1} = \pt(\pi_{t-1})$,
we have $\pi_{t-1}(s) < 2 \cdot \size(p_{t-1}, s)$. Summing up, 
\begin{align*}
    \ALGS(\IS,t) 
    & \textstyle = \pi_{t-1}(s) + \sum_{j \in [k-1]} 
        \swap(v_j, v_{j+1}) \\
    & \textstyle < 2 \cdot \size(p_{t-1}, s) + \sum_{j \in [k-1]} 
        4 \cdot \ALGH(\IH,t,x_j)  \\
    & < 4 \cdot \ALGH(\IH,t). 
    \qedhere
\end{align*}
\end{proof}

\subsection{Proof of Theorem 1}

Let $\OPTS$ and $\OPTH$ be the optimal solutions for inputs $\IS$ and $\IH$,
respectively. To show the competitive ratio of $\ALGS$, it remains to relate the
costs of these optimal solutions. 

We say that an algorithm is \emph{move-to-front based (MTF-based)} if, in
response to $R_t$, it chooses exactly one of the elements from $R_t$ and brings
it to the list front; furthermore, it does not perform any further list
reordering.

\begin{lemma}
\label{lem:ec_to_mssc_2}
For any input $\IS$ for the MSSC problem, there exists an (offline) MTF-based 
solution \MTFS, such that $\MTFS(\IS) \leq 2 \cdot \OPTS(\IS)$. 
\end{lemma}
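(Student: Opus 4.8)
The plan is to take an arbitrary optimal dynamic solution \OPTS and convert it into an MTF-based solution at no more than twice the cost. The natural idea is a potential/amortized argument, or equivalently an exchange argument processed step by step. At each step $t$, \OPTS pays some access cost $a_t = \min_{x\in R_t}\pi^{\OPTS}_{t-1}(x)$ and some reordering cost; I want to design \MTFS so that across the whole instance $\MTFS(\IS)\le 2\cdot\OPTS(\IS)$. The key structural observation I would use is that \OPTS, in order to pay small access cost $a_t$ at step $t$, must have some element $s\in R_t$ at position $\le a_t$, and that this costs \OPTS ``something'': either \OPTS recently paid to move $s$ forward, or $s$ was already near the front because of an earlier request. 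So the access events of \OPTS already implicitly pay for bringing elements of $R_t$ forward.

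First I would fix the element brought to front by \MTFS at step $t$: let it be exactly the element $s_t\in R_t$ that \OPTS keeps nearest the front just before serving $R_t$ (i.e.\ the element realizing $a_t = \pi^{\OPTS}_{t-1}(s_t)$). Then \MTFS's access cost at step $t$ is its own $\pi^{\MTFS}_{t-1}(s_t)$, and its reordering cost is exactly $\pi^{\MTFS}_{t-1}(s_t)-1$ (moving $s_t$ from its current position to the front, which also shifts everything in front of it back by one). So $\MTFS(\IS,t) = 2\pi^{\MTFS}_{t-1}(s_t) - 1 < 2\pi^{\MTFS}_{t-1}(s_t)$. It therefore suffices to show $\sum_t \pi^{\MTFS}_{t-1}(s_t) \le \OPTS(\IS)$ (up to the additive/constant slack), i.e.\ the MTF list keeps the requested elements at least as close to the front, in total, as \OPTS pays overall.

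The cleanest way I see to prove that is a direct charging via the structure of move-to-front lists. In an MTF list, the position of $s_t$ at time $t$ equals $1$ plus the number of \emph{distinct} elements that have been requested (and hence moved to front) strictly more recently than the last request of $s_t$ --- exactly the classical MTF characterization used in list-update analysis. I would then charge each unit of $\pi^{\MTFS}_{t-1}(s_t)$ (beyond the $+1$) to a corresponding ``blocking'' element $y$ that \OPTS also had to deal with, and argue that \OPTS's cost on the time window between the last request of $s_t$ and the current step $t$ is at least the number of such distinct blockers: intuitively, for each such $y$, either \OPTS had $y$ in front of $s_t$ (so \OPTS's access/reordering structure in that window accounts for it), or \OPTS moved $s_t$ relative to $y$ within the window at a cost of at least one inversion. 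Summing these charges over all $t$, with the standard care that each unit of \OPTS's cost is charged $O(1)$ times, yields $\sum_t (\pi^{\MTFS}_{t-1}(s_t)-1) \le \OPTS(\IS)$ and hence the claim.

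The main obstacle is making the charging argument tight enough to get the constant $2$ rather than some larger constant: one has to be careful that the interaction between \OPTS's reordering moves and its access costs is not double-counted, and that the ``$+1$'' terms (which sum to $m$) are absorbed either into \OPTS's access cost (which is at least $1$ per step, since positions are $\ge 1$) or into the additive constant $\xi$ allowed in the competitive definition. A clean alternative that sidesteps the subtle charging is a potential-function argument: define $\Phi$ to be (a constant times) the number of inversions between the \MTFS list and the \OPTS list, show $\MTFS(\IS,t) + \Delta\Phi \le 2\cdot\OPTS(\IS,t)$ at every step by analyzing separately (i)~\OPTS's reordering sub-step, during which only $\Phi$ can change and it changes by at most the number of inversions \OPTS creates, and (ii)~the access-plus-MTF sub-step, where \MTFS moving $s_t$ to front cannot increase $\Phi$ by more than $2a_t$ because $s_t$ is at \OPTS-position $a_t$, so at most $a_t-1$ of the elements \MTFS jumps over are behind $s_t$ in \OPTS's list. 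I expect the potential-function route to be the one the authors take, and getting the exact bookkeeping in sub-step (ii) right is where the factor $2$ is won or lost.
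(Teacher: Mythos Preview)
Your proposal is correct and follows the same approach as the paper: define $s_t$ as the element of $R_t$ that $\OPTS$ has nearest the front, and let $\MTFS$ bring $s_t$ to the front at each step. The paper packages this as a black-box reduction to the classical list update problem --- it forms the singleton instance $\JS = (\pi_0, \{s_1\}, \ldots, \{s_m\})$, notes that $\OPTS$'s own trajectory is a feasible solution for $\JS$ of the same cost, cites Sleator--Tarjan for $\MTFS(\JS) \le 2\cdot\OPT(\JS)$, and finally observes $\MTFS(\IS) \le \MTFS(\JS)$ since the reorderings are identical and the accesses can only be cheaper on the larger sets $R_t \supseteq \{s_t\}$ --- whereas your inversion-potential route is exactly the Sleator--Tarjan argument unrolled.
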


\begin{proof}
Based on the actions of \OPTS on $\IS = (\pi_0, R_1, \dots, R_m)$, 
we may create an input $\JS = (\pi_0, R'_1, \dots, R'_m)$, 
where $R'_i$ is a singleton set containing exactly the element from $R_i$ 
that is closest to the front on the list of $\OPTS$.

Clearly, $\OPTS(\JS) = \OPTS(\IS)$. Furthermore, $\JS$ is an~instance of the
list update problem, for which it is known that moving the requested element to
the list front is a 2-approximation~\cite{SleTar85}. Thus, $\MTFS(\JS) \leq 2
\cdot \OPT(\JS)$. Finally, we observe that reordering actions of $\MTFS(\JS)$
can be also applied to input $\IS$. While the movement cost remains then the
same, the access cost can be only smaller, i.e., $\MTFS(\IS) \leq \MTFS(\JS)$.
The lemma follows by combining the shown inequalities.
\end{proof}

\begin{lemma}
\label{lem:ec_to_mssc_3}
For any input $\IS$ for the MSSC problem, and the associated input $\IH$ for the
EC problem, $\OPTH(\IH) \leq 6 \cdot \MTFS(\IS)$.
\end{lemma}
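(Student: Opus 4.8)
The plan is to take the MTF-based solution $\MTFS$ for $\IS$ — which, by the definition of MTF-based, responds to each $R_t$ by bringing a single chosen element $s_t \in R_t$ to the front of its list — and convert it into a solution for the EC instance $\IH$. The natural idea is: whenever $\MTFS$ moves $s_t$ to the list front, the EC algorithm moves $s_t$ into chunk $S_0$. Since $S_0$ has capacity $1$, this evicts the element currently in $S_0$; that evicted element must be placed somewhere, and the cleanest choice is to cascade it into $S_1$, evicting an element of $S_1$ into $S_2$, and so on, down to $S_{w-1}$. This mirrors exactly the ``pull one element to the front and shift everyone else back by one'' behavior of move-to-front on a list, so the resulting EC partitioning stays consistent with the permutation maintained by $\MTFS$ (it is its canonic partitioning).

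The next step is to bound the cost of this EC solution step by step against $\MTFS(\IS,t)$. For the access cost: $s_t$ is the element of $R_t$ closest to the front in $\MTFS$'s list, so its access cost there is $\pi_{t-1}(s_t)$, while the EC access cost is $\size(p_{t-1}, s_t) \le \pi_{t-1}(s_t)$ by \eqref{eq:pi_vs_part}; so EC access cost is no larger. For the movement cost: when $s_t$ sits at position $\pi_{t-1}(s_t)$, it lies in chunk $S_{i}$ with $i = \pt(\pi_{t-1})(s_t)$, and moving $s_t$ to $S_0$ triggers a cascade touching chunks $S_0, S_1, \dots, S_i$; each chunk-to-chunk move of one element between $S_{j-1}$ and $S_j$ costs $\max\{2^{j-1}, 2^j\} = 2^j$, so the total movement cost is $\sum_{j=0}^{i} 2^j < 2^{i+1} = 2\cdot\size(p_{t-1}, s_t) \le 2\cdot\pi_{t-1}(s_t)$, using \eqref{eq:pi_vs_part} again (one should double-check the endpoint: the $j=0$ term is moving $s_t$ itself into $S_0$, whose cost is $\max\{\size(p_{t-1},s_t), 1\} = \size(p_{t-1},s_t)$ — slightly larger than $2^0$ — but this is absorbed in the geometric-sum slack, and the constant $6$ in the statement leaves ample room). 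Meanwhile $\MTFS(\IS,t) = \pi_{t-1}(s_t) + (\text{reordering cost of MTF})$, and the reordering cost of move-to-front from position $\pi_{t-1}(s_t)$ is $\pi_{t-1}(s_t) - 1$, so $\MTFS(\IS,t) = 2\pi_{t-1}(s_t) - 1$. Combining, the per-step EC cost is at most $\pi_{t-1}(s_t) + 2\pi_{t-1}(s_t) = 3\pi_{t-1}(s_t) < 3\cdot\MTFS(\IS,t)$, comfortably within the factor $6$; summing over $t$ and noting that $\OPTH(\IH)$ is at most the cost of this particular EC solution gives the lemma. I would also need to handle the initial partitioning: $\IH$ starts from $p_0 = \pt(\pi_0)$ and $\IS$ from $\pi_0$, so the constructed EC solution starts from the correct state and no extra setup cost arises.

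I do not expect a serious obstacle here; the construction is a direct simulation and the arithmetic is a geometric sum. The one place to be careful is the precise accounting at the two ``ends'' of the cascade — the cost of inserting into $S_0$ (chunk of size $1$, so the $\max$ picks up the old chunk size) and making sure the cascade actually terminates at chunk $S_i$ and not earlier or later — but the generous constant $6$ versus the true factor $\approx 3$ means no delicate optimization is needed. An alternative framing, if one prefers to avoid reasoning about cascades explicitly, is to invoke Lemma~\ref{lem:ec_to_mssc_1} in reverse spirit: the canonic partitioning of the list maintained by $\MTFS$ evolves step by step, and each adjacent-swap-free move-to-front of cost $2\pi_{t-1}(s_t)-1$ on the list corresponds to an EC move of comparable cost — but the direct geometric-sum argument above is cleaner and self-contained, so that is the route I would take.
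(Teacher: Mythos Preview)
Your approach is essentially identical to the paper's: construct an EC solution that tracks the canonic partitioning of $\MTFS$'s permutation, realize each move-to-front as a cascade through chunks $S_0,\dots,S_\ell$, and bound the resulting geometric sum against the MTF reordering cost. One slip to fix: the element $s_t$ that $\MTFS$ brings to the front is \emph{not} in general the element of $R_t$ closest to the front in $\MTFS$'s own list (the definition of MTF-based only requires $s_t\in R_t$), so your identity $\MTFS(\IS,t)=2\pi_{t-1}(s_t)-1$ need not hold; the paper separates the access element from the moved element and bounds EC access by MTF access and EC movement by $6$ times MTF reordering, which is the clean way to patch this.
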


\begin{proof}
To show the lemma, it suffices to show that there exists an offline algorithm
$\OFFH$ satisfying $\OFFH(\IH, t) \leq 6 \cdot \MTFS(\IS, t)$ for any step $t$.

Let $\OFFH$ be an (offline) algorithm for $\IH$ that in step~$t$ takes the
permutation $\piMTF_t$ of $\MTFS$ and changes its partitioning
to~$\pt(\piMTF_t)$. We now compare the costs of $\OFFH$ to $\MTFS$ in step $t$,
separately for access costs and movement/reordering costs.

Let $s$ be the element from $R_t$ that $\MTFS$ has closest to the list front.
The access cost of \OFFH is $\size(\pt(\piMTF_{t-1}), s)$ which by 
\eqref{eq:pi_vs_part} is at most $\piMTF_{t-1}(s)$, the access cost of \MTFS.

Let $x$ be the element that $\MTFS$ moves to the list front and let $v =
\piMTF_{t-1}(x)$. If $v = 1$, then neither $\MTFS$ nor $\OFFH$ perform any
reordering/movement, and the lemma follows. Thus, we assume that $v \geq 2$. To
move $x$ to the list front, $\MTFS$ executes $v - 1$ swaps; this reordering
increments the positions of all elements that originally preceded $x$. 

To estimate the cost of \OFFH, we analyze the movement costs associated with
changing partitioning from $\pt(\piMTF_{t-1})$ to $\pt(\piMTF_{t})$. Let $\ell =
\lfloor \log_2 v \rfloor \geq 1$. When $x$ is moved to the front, its chunk
changes from $S_\ell$ to $S_0$. To describe the remaining changes, we assume
that the list is ordered from left to right with the list front on the left. Then,
the rightmost element of any chunk $S_0, S_1, \dots, S_{\ell-1}$ changes its
chunk to the next one. Hence, the movement cost of \OFFH is
\begin{align*}
    \textstyle \max & \textstyle \{ 2^\ell, 2^0 \} + \sum_{j \in [\ell]} \max \{ 2^j, 2^{j+1} \} 
    < 3 \cdot 2^\ell \leq 3 \cdot v \leq 6 \cdot (v-1),
\end{align*}
which is $6$ times the reordering cost of \MTFS.
\end{proof}

\begin{proof}[Proof of Theorem~\ref{thm:ec_to_mssc}]
Let $\ALGS$ be defined as in Lemma~\ref{lem:ec_to_mssc_1}. Then,
\begin{align*}
    \ALGS(\IS) 
    & \leq 4 \cdot \ALGH(\IH) 
      \leq 4 \cdot c \cdot \OPTH(\IH) \\ 
    & \leq 24 \cdot c \cdot \MTFS(\IS) 
    \leq 48 \cdot c \cdot \OPTS(\IS).
\end{align*}
The inequalities follow by summing Lemma~\ref{lem:ec_to_mssc_1} over all steps,
$c$-competitiveness of \ALGH, Lemma~\ref{lem:ec_to_mssc_3}, 
and finally by Lemma~\ref{lem:ec_to_mssc_2}.
\end{proof}


\section{Solving Exponential Caching}
\label{sec:algorithm}

In this section, we provide an $O(r^2)$-competitive randomized algorithm for the 
Exponential Caching problem, where $r$ is the maximum cardinality 
of requested sets. By Theorem~\ref{thm:ec_to_mssc}, this will yield an 
$O(r^2)$-competitive algorithm for the Min-Sum Set Cover problem. 
We note that our algorithms do not require prior knowledge about $r$.

In the following description, we skip $t$ subscripts in the notations 
and use $p$ as the \emph{current} value of the partition function,
and $S_i$ as the \emph{current} contents of an appropriate chunk. 

\SetAlgorithmName{Routine}{routine}{List of routines}
\begin{algorithm}[tb]
\caption{\fetch(z), where $z$ is any element}
\label{alg:fetch}
\If{$p(z) > 0$}{
    $\ell \gets p(z)$ \\
    \For{$i = 0, 1, \dots, \ell-1$}{
        $a_i \gets \text{random element of $S_i$}$ \\
    }
    \textbf{move} $z$ from $S_{\ell}$ to $S_0$ \\
    \For{$i = 0, 1, \dots, \ell-1$}{
        \textbf{move} $a_i$ from $S_i$ to $S_{i+1}$ \\
    }
}
$b(z) \gets 0$
\end{algorithm} 

Our algorithm \textsc{Lazy-Move-All-To-Front} (\LM) maintains budget $b(z)$ for
any element $z \in U$. Initially, all budgets are set to zero.

At certain times, \LM wants to move an element $z$ to chunk $S_0$. However,
to preserve the cardinality of $S_0$, it needs to make space in $S_0$. It does
so using a procedure \fetch(z) defined in Routine~\ref{alg:fetch}. This procedure 
chooses a~random sequence of elements and moves them to chunks of larger indexes. 
It also moves $z$ to $S_0$ and resets its budget to zero. 

To serve a set $R = \{x, y_0, y_1, \dots, y_{q-2} \}$, where $q \leq r$ and
$x$ is an element of $R$ with the smallest chunk index, \LM executes $\fetch(x)$
moving $x$ to $S_0$. A natural strategy would be then to move the
remaining elements $y_i$ towards chunks with smaller indexes. However, such
an approach leads to a huge competitive ratio. Instead, \LM performs these movements
in a lazy manner: it increases the budgets of the remaining elements and moves
the elements to $S_0$ once their budgets reach a certain threshold. The details of \LM are given
in Algorithm~\ref{alg:alg}.

\SetAlgorithmName{Algorithm}{algorithm}{List of algorithms}
\begin{algorithm}[tb]
\caption{\textsc{Lazy-Move-All-To-Front}\\
\textbf{Input: } Set $R =  \{x, y_0, y_2, \dots, y_{q-2} \}$, where $q \leq r$ and
$p(x) \leq p(y_i)$ for $i \in [r-1]$} 
\label{alg:alg}
\textbf{pay} access cost $\size(p, x) = 2^{p(x)}$ \label{line:first} \\
\textbf{execute} $\fetch(x)$ \\
\For{$i = 0, 1, \dots, q-2$}{
    $b(y_i) \leftarrow b(y_i) + 2^{p(x)}$ \label{line:budget_increase}
}
\While{exists $z$ such that $b(z) \geq 2^{p(z)}$}{  \label{line:last_line_2}
    \textbf{execute} $\fetch(z)$  \label{line:last_line_3}
}
\end{algorithm}


\subsection{Termination}

We start by showing that the algorithm Algorithm~\ref{alg:alg} is well-defined,
i.e., it terminates. If an~element $z$ satisfies $b(z) \leq 2^{p(z)}$, then we
say that its budget is \emph{controlled}, and it is \emph{uncontrolled}
otherwise.

\begin{observation}
\label{obs:budgets_controlled}
Executing $\fetch(z)$ makes the budget of $z$ controlled and it does not cause
budgets of other elements to become uncontrolled. 
\end{observation}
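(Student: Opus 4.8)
The plan is to simply unwind the definition of $\fetch(z)$ from Routine~\ref{alg:fetch} and track exactly which elements have their chunk index or budget altered. For the first part of the statement, I would note that the last line of the routine sets $b(z) \gets 0$ unconditionally, and that after the routine the chunk index $p(z)$ is either unchanged (this happens precisely when $p(z) = 0$ on entry, so the body is skipped) or equal to $0$ (when $z$ is moved from $S_\ell$ to $S_0$). In either case $2^{p(z)} \geq 2^0 = 1 > 0 = b(z)$, so $z$ is controlled once the routine finishes.

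For the second part, the key observation is that the only elements other than $z$ whose state is touched by $\fetch(z)$ are the sampled elements $a_0, a_1, \dots, a_{\ell-1}$, and the only thing that changes for each of them is its chunk index: $a_i$ is moved from $S_i$ to $S_{i+1}$, i.e., its index increases by exactly one. No budget other than $b(z)$ is modified anywhere in the routine. Thus for each $a_i$ the controlling threshold $2^{p(a_i)}$ \emph{doubles} (from $2^i$ to $2^{i+1}$) while $b(a_i)$ stays fixed; hence if $a_i$ was controlled before the fetch, so $b(a_i) \leq 2^i$, then afterwards $b(a_i) \leq 2^i < 2^{i+1}$ and it is still controlled. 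Every element other than $z$ and the $a_i$'s keeps both its budget and its chunk index, so its controlled/uncontrolled status does not change. Consequently $\fetch(z)$ cannot turn a controlled element into an uncontrolled one.

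There is no real obstacle here; the only thing worth checking explicitly is that moving $z$ into $S_0$ and each $a_i$ up one chunk leaves every remaining element in its original chunk (so the partitioning stays valid and no other indices shift), which is immediate from the routine, since each of $S_0, \dots, S_\ell$ simultaneously loses one element and gains one. The statement then follows once one records the monotonicity point above: the routine never decreases the chunk index of any element it moves except $z$, and $z$'s budget is reset to zero.
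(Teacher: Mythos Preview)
Your argument is correct and follows essentially the same approach as the paper: you observe that $\fetch(z)$ only increases the chunk indices of the sampled elements $a_i$ while leaving their budgets unchanged, and resets $b(z)$ to zero, from which the claim is immediate. The paper's proof is just a terser version of what you wrote.
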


\begin{proof}
For the elements randomly chosen within routine $\fetch(z)$, their chunk indexes
are increased without changing their budgets which can only make their budgets
controlled. The only element whose chunk index is decreased is $z$ itself, but
its budget is reset to zero, which trivially makes it controlled.

Note that execution of a single $\fetch$ operation may make multiple budgets
controlled.
\end{proof}

By the observation above, the number of elements with uncontrolled budgets
decreases with each iteration of the while loop in Line~\ref{line:last_line_2}
of Algorithm~\ref{alg:alg}. Thus, processing a~set~$R_t$ by algorithm \LM terminates.


\subsection{Potential function}

We compare the cost of \LM to that of an optimal offline solution \OPT. We use
$p$ and $S_i$ to denote the partitioning function and appropriate chunks in the
solution of \ALG, and we use $p^*$ and $S^*_i$ for the corresponding notions in
the solution of~\OPT.

In our analysis, we use four parameters:
$\fa = 7$, $\fc = 7 r - 6$, $\fb = 21r - 11$, $\fk = \lceil \log \fb \rceil$.
Our analysis does not depend on the specific values of these parameters, 
but we require that they satisfy the following relations.

\begin{fact}
\label{fact:constants}
Parameters $\fa$, $\fb$ and $\fc$ satisfy the following relations:
$\fa \geq 7$,
$\fc \geq \fa \cdot (r-2) + 8$,
$\fb \geq \fa \cdot (r-1) + 2 \fc + 8$.
Furthermore, $\fk$ is an integer satisfying $2^\fk \geq \fb$.
\end{fact}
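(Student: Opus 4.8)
\textbf{Proof plan for Fact~\ref{fact:constants}.}
The statement is purely arithmetic, so the plan is simply to substitute the chosen values $\fa = 7$, $\fc = 7r - 6$, $\fb = 21r - 11$ and verify each of the four claims directly; I expect no real obstacle, only the bookkeeping of confirming that the algebra closes exactly.

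First I would dispose of the trivial items: $\fa = 7 \geq 7$ holds by definition, and $\fk = \lceil \log \fb \rceil$ is an integer because it is the ceiling of a real number. For the inequality $2^\fk \geq \fb$, I would invoke the defining property of the ceiling, $\fk \geq \log \fb$, together with monotonicity of $x \mapsto 2^x$, giving $2^\fk \geq 2^{\log \fb} = \fb$ (here $\log$ is base $2$, as is needed for the bound to make sense); one should also note $\fb = 21r - 11 \geq 1$ for $r \geq 1$ so that the logarithm is well-defined.

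The two substantive relations I would check by expanding both sides. For the second, $\fa \cdot (r-2) + 8 = 7(r-2) + 8 = 7r - 14 + 8 = 7r - 6 = \fc$, so the required inequality $\fc \geq \fa \cdot (r-2) + 8$ holds with equality. For the third, $\fa \cdot (r-1) + 2\fc + 8 = 7(r-1) + 2(7r-6) + 8 = (7r - 7) + (14r - 12) + 8 = 21r - 11 = \fb$, so again $\fb \geq \fa \cdot (r-1) + 2\fc + 8$ holds, in fact with equality. The fact that these two bounds are tight is worth a one-line remark, since it explains why these particular constants were selected: they are the smallest values making the potential-function argument of the following section go through.

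The only point requiring mild care — and the closest thing to an ``obstacle'' — is consistency of the logarithm base and the edge case $r = 1$ (equivalently the regime where $\fb$ is small): one should confirm that $\fk \geq 1$, or at least that $2^\fk \geq \fb$ remains valid, when $r$ is at its minimum; this follows immediately since $\fb \geq 10$ for $r \geq 1$, hence $\fk \geq 4$.
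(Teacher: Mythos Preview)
Your proposal is correct and matches the paper's treatment: the paper states Fact~\ref{fact:constants} without proof, since it is an immediate arithmetic verification from the definitions $\fa=7$, $\fc=7r-6$, $\fb=21r-11$, $\fk=\lceil\log\fb\rceil$, exactly as you carry out. Your observation that the two main inequalities hold with equality (explaining the choice of constants) is a nice addition but not in the paper.
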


To compute the competitive ratio of \LM, we use amortized analysis. To this end, 
for any element $z \in U$, we define its potential 
\begin{equation*}
    \Phi_z = \begin{cases}
        \fa \cdot b(z) & \text{if $p(z) \leq p^*(x) + \fk$,} \\
        \fb \cdot 2^{p(z)} - \fc \cdot b(z) 
            & \text{if $p(z) \geq p^*(z) + \fk+1$.} 
    \end{cases}
\end{equation*}
We also define the total potential as $\Phi = \sum_{z \in U} \Phi_z$. 


To streamline the proof, we split each step $t$ into two parts. In the first
part (studied in Sections~\ref{sec:fetch} and~\ref{sec:alg_moves}), 
$\LM$ executes Algorithm~\ref{alg:alg} (pays the access cost, chooses
element movements, and pays for them), while \OPT pays the access cost only. In
the second part (studied in Section~\ref{sec:opt_moves}), 
\LM does nothing, while \OPT moves its elements and pays for
these movements. 

We use $\Delta \LM$, $\Delta \OPT$, and $\Delta \Phi$ to denote the increments
in costs of \LM and \OPT and the total potential, respectively, associated with
the currently discussed event. We show that for both step parts, it holds that
$\E[\Delta \LM + \Delta \Phi] \leq O(r^2) \cdot \Delta \OPT$. The competitive
ratio of $O(r^2)$ will then follow by summing this relation for both parts over
all steps of the input.


\subsection{Budget invariant}

We start with a simple bound on the budget values. 

\begin{observation}
\label{obs:budget_invariant}
At any time, for any element $z \in U$, it holds that $b(z) \leq 2 \cdot 2^{p(z)}$.
\end{observation}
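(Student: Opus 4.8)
The plan is to prove the invariant by induction over the sequence of elementary operations the algorithm performs, checking that each one preserves $b(z)\le 2\cdot 2^{p(z)}$ for every $z$. Before doing that, I would record one auxiliary fact: immediately before the first request, and immediately after the \textbf{while} loop of Algorithm~\ref{alg:alg} finishes processing any request, every element is \emph{controlled}, i.e. $b(z)<2^{p(z)}$. This is immediate: initially $b(z)=0<1\le 2^{p(z)}$, and the termination condition of the \textbf{while} loop is precisely the negation of ``$\exists z:\ b(z)\ge 2^{p(z)}$''. Since nothing happens between requests, the state stays controlled throughout the idle periods as well.

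Next I would go through the only operations that touch $b$ or $p$. Inside $\fetch$: the fetched element $z$ has $b(z)$ reset to $0$ and $p(z)$ reset to $0$, so $b(z)=0\le 2\cdot 2^{0}$; each randomly chosen $a_i$ has $p(a_i)$ incremented by one with $b(a_i)$ unchanged, which only weakens the bound. The remaining operation is Line~\ref{line:budget_increase}, $b(y_i)\gets b(y_i)+2^{p(x)}$, and this is the substantive case. Here I would argue that at the instant this line executes, $b(y_i)$ and $p(y_i)$ are exactly what they were at the end of the previous request (or initially): the elements of $R$ are distinct, so earlier iterations of the \textbf{for} loop modified other elements, and the only $\fetch$ performed so far in the current step is $\fetch(x)$, which evicts elements only from chunks $S_0,\dots,S_{p(x)-1}$ --- and since $p(x)\le p(y_i)$, the element $y_i$ lies in none of these, so it is neither moved nor reset. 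Combining this with the auxiliary fact gives $b(y_i)<2^{p(y_i)}$ just before the increment; moreover $2^{p(x)}\le 2^{p(y_i)}$, because $x$ has the smallest chunk index in $R$ at the start of the step, $\fetch(x)$ does not increase it, and $\fetch(x)$ does not change $p(y_i)$. Hence after the increment $b(y_i)<2^{p(y_i)}+2^{p(y_i)}=2\cdot 2^{p(y_i)}$. Finally, the subsequent \textbf{while} loop consists only of $\fetch$ calls, which by the cases above cannot raise any budget above $2\cdot 2^{p}$, so the bound holds at every intermediate moment as well.

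The only step that is more than bookkeeping is the budget-increase case: it forces one to pin down the exact values of $b(y_i)$ and $p(y_i)$ at the moment Line~\ref{line:budget_increase} fires, which amounts to verifying that $\fetch(x)$ leaves both of them untouched. That verification rests entirely on the defining property $p(x)\le p(y_i)$ of the chosen element $x$ together with the fact that $\fetch$ evicts elements only from strictly smaller-indexed chunks; once that is spelled out, everything else is one line of arithmetic.
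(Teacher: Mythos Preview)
Your proof is correct and follows essentially the same approach as the paper: both argue that budgets are controlled (in fact strictly below $2^{p(z)}$) at the start of each step, and that the single budget increase in Line~\ref{line:budget_increase} adds at most $2^{p(x)}\le 2^{p(y_i)}$, yielding the factor-$2$ slack. You are simply more explicit than the paper about two points it leaves implicit---that $\fetch(x)$ cannot touch $y_i$ (since the random evictions come from chunks of index $<p(x)\le p(y_i)$), and that each $\fetch$ in the \textbf{while} loop preserves the bound.
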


\begin{proof}
Note that at the beginning of any step, the budgets of all elements are
controlled ($b(z) \leq 2^{p(z)}$ for any element $z$). This holds trivially at
the beginning as all budgets are then zeros. Moreover, by
Observation~\ref{obs:budgets_controlled}, this property is ensured at the end of
a step by the while loop in Lines \ref{line:last_line_2}--\ref{line:last_line_3}
of~Algorithm~\ref{alg:alg}.

Within a step, the budget of an element may not be controlled because of budget
increases in Line~\ref{line:budget_increase}. However, because of this action,
the budget of $y_i$ may grow only to $2^{p(y_i)} + 2^{p(x)}$, which is at most
$2 \cdot 2^{p(y_i)}$ as $p(x) \leq p(y_i)$.
\end{proof}

Note that $\fb \geq 2 \fc$ by Fact~\ref{fact:constants}.
Thus, Observation~\ref{obs:budget_invariant} and the potential definition
immediately imply the following claim. 

\begin{corollary}
\label{cor:non_negative}
At any time, $\Phi_z \geq 0$ for any element $z$.
\end{corollary}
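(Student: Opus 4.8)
The plan is a direct case analysis following the two-case definition of $\Phi_z$, using only that budgets are nonnegative and the budget invariant of Observation~\ref{obs:budget_invariant}.

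First I would dispatch the case $p(z) \leq p^*(z) + \fk$, where $\Phi_z = \fa \cdot b(z)$. Here I would simply observe that budgets are always nonnegative: every budget starts at $0$, and the only operations affecting it are the increments in Line~\ref{line:budget_increase} of Algorithm~\ref{alg:alg} and the reset to $0$ inside $\fetch$; hence $b(z) \geq 0$ at all times, and since $\fa = 7 > 0$ this gives $\Phi_z \geq 0$ immediately. Next I would handle the case $p(z) \geq p^*(z) + \fk + 1$, where $\Phi_z = \fb \cdot 2^{p(z)} - \fc \cdot b(z)$. The key ingredient is Observation~\ref{obs:budget_invariant}, which yields $b(z) \leq 2 \cdot 2^{p(z)}$, so $\fc \cdot b(z) \leq 2\fc \cdot 2^{p(z)}$. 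Combining this with the relation $\fb \geq 2\fc$ — which follows from Fact~\ref{fact:constants}, since $\fb \geq \fa \cdot (r-1) + 2\fc + 8 \geq 2\fc$ — gives $\fb \cdot 2^{p(z)} \geq 2\fc \cdot 2^{p(z)} \geq \fc \cdot b(z)$, hence $\Phi_z \geq 0$.

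There is essentially no obstacle here: the statement is an immediate consequence of the budget invariant together with the chosen constants, which is exactly why it is phrased as a corollary rather than a lemma. The only point requiring a moment's care is that the two cases in the definition of $\Phi_z$ are exhaustive — every element satisfies exactly one of $p(z) \leq p^*(z) + \fk$ or $p(z) \geq p^*(z) + \fk + 1$ — so no intermediate case needs separate treatment, and the nonnegativity holds at every point in time because Observation~\ref{obs:budget_invariant} is itself an invariant maintained throughout the execution.
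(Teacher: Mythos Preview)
Your proposal is correct and follows exactly the paper's own reasoning: the paper simply notes (in the sentence preceding the corollary) that $\fb \geq 2\fc$ by Fact~\ref{fact:constants}, so Observation~\ref{obs:budget_invariant} together with the potential definition immediately yields nonnegativity. Your write-up just unpacks this into the obvious two-case argument, adding the remark that $b(z)\geq 0$ handles the first case.
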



\subsection{Analysis of operation FETCH}
\label{sec:fetch}

To analyze the amortized cost associated with a single operation 
$\fetch$, we start with calculating the change in the potential due to a movement
of a random element.

\begin{lemma}
\label{lem:elem_upgraded}
Fix a chunk index $i \in [w-1]$. Let $a$ be an element chosen uniformly at random from chunk~$S_i$. 
If $a$ is moved to chunk $S_{i+1}$, then $\E[\Delta \Phi_a] \leq 2^{i+2}$. 
The result holds even conditioned on the current partitioning of \ALG.
\end{lemma}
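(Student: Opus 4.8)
The plan is to directly compute the expected change in $\Phi_a$ by conditioning on which chunk index $a$'s partition value falls into, relative to the threshold $p^*(a) + \fk$ that separates the two cases of the potential definition. Before the move, $a$ is in chunk $S_i$, so $p(a) = i$; afterwards $p(a) = i+1$. The budget $b(a)$ does not change during this random-element move (only $z$'s budget is reset, and only $y_i$'s budgets are incremented, elsewhere in the algorithm), so $\Delta\Phi_a$ depends only on which case of the piecewise definition applies before and after.

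First I would split into cases according to $i$ versus $p^*(a) + \fk$. If $i+1 \leq p^*(a)+\fk$ (so both before and after, $a$ is in the ``low'' regime), then $\Phi_a = \fa\cdot b(a)$ before and after, so $\Delta\Phi_a = 0 \leq 2^{i+2}$ trivially. If $i \geq p^*(a)+\fk+1$ (both in the ``high'' regime), then $\Phi_a = \fb\cdot 2^{p(a)} - \fc\cdot b(a)$ throughout, and since $b(a)$ is unchanged, $\Delta\Phi_a = \fb\cdot(2^{i+1}-2^i) = \fb\cdot 2^i$; here I would use Observation~\ref{obs:budget_invariant} ($b(a) \leq 2\cdot 2^i$) together with $\fb \geq 2\fc$ — wait, that sign goes the wrong way, so this case needs more care. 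Actually in the high regime the potential \emph{increases} by $\fb \cdot 2^i$, which is not obviously bounded by $2^{i+2}$ since $\fb = 21r-11$ can be large. So the genuine content of the lemma must be that this bad case essentially cannot happen for a \emph{random} element of $S_i$, or that its probability is tiny. This is the crossover case $i = p^*(a) + \fk$ (before) versus $i+1 = p^*(a)+\fk+1$ (after) that I expect to be the main obstacle: here the definition switches from $\fa\cdot b(a)$ to $\fb\cdot 2^{i+1} - \fc\cdot b(a)$, which could jump up by a lot for a single worst-case element.

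The resolution, I expect, is that the potential is only \emph{defined} for these two disjoint ranges and is (implicitly) taken to be, say, the minimum or the continuous interpolation in the gap $p^*(z)+\fk < p(z) < p^*(z)+\fk+1$ — but since chunk indices are integers there is no gap, so the switch is at exactly one index. The key observation is that for a \emph{uniformly random} $a \in S_i$, the number of elements $a$ with $p^*(a) = i - \fk$ (i.e.\ those at the crossover) is at most $|S^*_{i-\fk}| = 2^{i-\fk}$, while $|S_i| = 2^i$, so this event has probability at most $2^{-\fk} \leq 1/\fb$ by Fact~\ref{fact:constants}. Multiplying the worst-case jump $\fb\cdot 2^{i+1}$ (using the budget invariant to control the $\fc\cdot b(a)$ term, which only helps) by probability $\leq 1/\fb$ gives a contribution of at most $2^{i+1}$, and combining with the other cases (which contribute $0$ or are handled by the sign of $\fc$) yields the bound $2^{i+2}$. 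I would organize the final computation as $\E[\Delta\Phi_a] = \sum_{\text{cases}} \Pr[\text{case}]\cdot(\text{bound on }\Delta\Phi_a\text{ in that case})$, checking that the decreasing-potential case (moving from high regime, where $2^{p(a)}$ grows — no wait, that increases) versus the low regime. Let me restate: the only case with positive $\Delta\Phi_a$ proportional to $\fb$ is the crossover, it has probability $\leq 2^{-\fk}$, and $2^{-\fk}\cdot \fb \cdot 2^{i+1} \leq 2^{i+1}$; all other cases contribute $\Delta\Phi_a \leq \fb \cdot 2^i \cdot \Pr \leq $ handled similarly or are zero, and summing gives $\leq 2^{i+2}$. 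The main obstacle is pinning down exactly what $\Phi_a$ is at the single crossover index and verifying the probability bound uses the validity of $p^*$ (so $|S^*_{i-\fk}| = 2^{i-\fk}$) — this is why the statement stresses that it holds even conditioned on \ALG's current partitioning, since the randomness is purely over the choice of $a$ within the fixed chunk $S_i$.
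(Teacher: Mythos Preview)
Your approach is essentially the paper's. The paper also conditions on where $p^*(a)$ falls relative to $i-\fk$, observes that the case $p^*(a) \geq i-\fk+1$ (your ``both low'') gives $\Delta\Phi_a = 0$, and bounds the complementary event using the validity of $p^*$. The one organizational difference is that the paper lumps your ``crossover'' and ``both high'' cases together into the single event $\{p^*(a) \leq i-\fk\}$, bounds its probability by $\sum_{j \leq i-\fk} |S^*_j|/|S_i| < 2^{-\fk+1}$, and uniformly bounds $\Delta\Phi_a \leq \fb \cdot 2^{i+1}$ on that event; multiplying gives $\E[\Delta\Phi_a] < \fb \cdot 2^{-\fk+1} \cdot 2^{i+1} \leq 2^{i+2}$ via $\fb \leq 2^\fk$. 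This avoids the momentary confusion in your write-up about the ``both high'' case --- there $\Delta\Phi_a = \fb\cdot 2^i$ is indeed large pointwise (not only at the crossover, contrary to your ``the only case'' remark), but that case is automatically absorbed into the same low-probability event.
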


\begin{proof}
We look at all $2^i$ elements from $S_i$ and their chunk indexes in the solution of \OPT.
Let 
\[
    \tilde{S_i} = \{ z \in S_i \;:\; p^*(z) \leq i-\fk \}.
\]
Note that 
\begin{align*}
    |\tilde{S_i}| 
    & \textstyle = \sum_{j \in [i-\fk+1]} |\{ z \in S_i \;:\; p^*(z) = j \}| \\
    & \textstyle = \sum_{j \in [i-\fk+1]} |S_i \cap S^*_j|
    \leq \sum_{j \in [i-\fk+1]} | S^*_j | \\
    & \textstyle \leq \sum_{j \in [i-\fk+1]} 2^j < 2^{i-\fk+1}.
\end{align*}
Hence, when $a$ is chosen randomly from $S_i$, 
\begin{equation}
\label{eq:p_star}
    \Pr[p^*(a) \leq i - \fk] 
    = \frac{|\tilde{S_i}|}{|S_i|} < 2^{i-\fk+1} / 2^i = 2^{-\fk+1}.
\end{equation}

To upper-bound $\E[\Delta \Phi_a \,|\, p^*(a) \leq i-\fk]$, we consider two
cases. By the potential definition, if $p^*(a) \leq i-\fk-1$, then $\Delta
\Phi_{a} = [\fb \cdot 2^{i+1} - \fc \cdot b(a)] - [\fb \cdot 2^{i} - \fc \cdot
b(a)] = \fb \cdot 2^i$. Otherwise, $p^*(a) = i - \fk$, and then $ \Delta
\Phi_{a} = [ \fb \cdot 2^{i+1} - \fc \cdot b(a) ] - \fa \cdot b(a) \leq \fb
\cdot 2^{i+1}$. Hence, in either case 
\[ 
    \E[\Delta \Phi_a \,|\, p^*(a) \leq i-\fk] \leq \fb \cdot 2^{i+1}. 
\]
Again by the potential definition,
\[
    \E[\Delta \Phi_a \,|\, p^*(a) \geq i-\fk+1] = 
    \fa \cdot b(a) - \fa \cdot b(a) = 0.
\]
Combining these bounds on $\Delta \Phi_a$ with \eqref{eq:p_star} yields
\begin{align*}
  \E[\Delta\Phi_{a}] 
  & = \E[\Delta \Phi_{a} \,|\, p^*(a) \leq i - \fk] \cdot \Pr[p^*(a) \leq i - \fk] \\
    & \quad + \E[\Delta \Phi_{a} \,|\, p^*(a) \geq i - \fk + 1] 
        \cdot \Pr[p^*(a) \geq i - \fk + 1] \\
  & < \beta \cdot 2^{i+1} \cdot 2^{-\fk+1} .
\end{align*}
The lemma follows as $\fb \leq 2^\fk$ by Fact~\ref{fact:constants}.
\end{proof}

\begin{lemma}
\label{lem:fetch_cost}
Whenever \LM executes operation $\fetch(z)$, 
$\E[\Delta \LM + \Delta \Phi] \leq 7 \cdot 2^{p(z)} - g$,
where $g$ is the value of $\Phi_z$ right before this operation.
\end{lemma}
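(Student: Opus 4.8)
The plan is to analyze the operation $\fetch(z)$ by accounting separately for the explicit movement cost that \LM pays and the change in potential of every element whose chunk index or budget is altered. When $p(z) = \ell > 0$, the routine picks random elements $a_0 \in S_0, \dots, a_{\ell-1} \in S_{\ell-1}$, moves $z$ from $S_\ell$ down to $S_0$, and moves each $a_i$ up from $S_i$ to $S_{i+1}$; finally $b(z)$ is reset to $0$. Every other element keeps its chunk index and budget, so its potential is unchanged. Hence $\Delta\LM$ is the sum of movement costs, and $\Delta\Phi$ decomposes as $\Delta\Phi_z$ plus $\sum_{i\in[\ell]}\Delta\Phi_{a_i}$.

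First I would bound the movement cost. Moving $z$ from $S_\ell$ to $S_0$ costs $\max\{2^\ell,2^0\}=2^\ell$, and moving $a_i$ from $S_i$ to $S_{i+1}$ costs $\max\{2^i,2^{i+1}\}=2^{i+1}$; summing the latter over $i\in[\ell]$ gives $2^{\ell+1}-2$, so $\Delta\LM = 2^\ell + 2^{\ell+1}-2 < 3\cdot 2^\ell = 3\cdot 2^{p(z)}$. Second, I would bound $\E[\Delta\Phi_{a_i}]$: this is exactly what Lemma~\ref{lem:elem_upgraded} gives, namely $\E[\Delta\Phi_{a_i}]\le 2^{i+2}$, and summing over $i\in[\ell]$ yields $\sum_{i}\E[\Delta\Phi_{a_i}] \le \sum_{i\in[\ell]} 2^{i+2} = 2^{\ell+2}-4 < 4\cdot 2^\ell = 4\cdot 2^{p(z)}$. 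Third, for the element $z$ itself: its budget drops to $0$, so its potential after the operation is either $\fa\cdot 0 = 0$ (in the low-chunk case, where $z$ now sits in $S_0$ so $p(z)=0 \le p^*(z)+\fk$ certainly holds) or, at worst, nonnegative by Corollary~\ref{cor:non_negative}; in any case $\Phi_z$ afterwards is $\ge 0$, so $\Delta\Phi_z \le 0 - g = -g$ where $g$ is the value of $\Phi_z$ just before the $\fetch$.

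Putting the three pieces together, $\E[\Delta\LM+\Delta\Phi] \le 3\cdot 2^{p(z)} + 4\cdot 2^{p(z)} - g = 7\cdot 2^{p(z)} - g$, which is the claimed bound; and in the trivial case $p(z)=0$ the routine performs no moves and only resets $b(z)$ to $0$, so $\Delta\LM = 0$, $\Delta\Phi = -g \le 7\cdot 2^0 - g$, and the bound holds as well. The conditioning subtlety in Lemma~\ref{lem:elem_upgraded} (the bound holds even conditioned on \LM's current partitioning) is what lets me apply it to each $a_i$ in turn despite the fact that the $a_i$'s are drawn sequentially and earlier choices change the chunks: I would apply linearity of expectation with the conditional bound at each stage. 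The main thing to be careful about is exactly this — making sure the potential changes of distinct $a_i$'s (and of $z$) do not interact, i.e. that moving $a_i$ to $S_{i+1}$ does not disturb $a_j$ for $j\ne i$ nor $z$ — which holds because the routine first snapshots all the $a_i$ before moving anything and the chunks $S_0,\dots,S_\ell$ receive exactly one incoming and one outgoing element each. The rest is the routine arithmetic above.
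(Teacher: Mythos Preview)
Your proof is correct and follows essentially the same approach as the paper: bound $\Delta\LM < 3\cdot 2^{p(z)}$, apply Lemma~\ref{lem:elem_upgraded} to each $a_i$ and sum to get $<4\cdot 2^{p(z)}$, and use $\Delta\Phi_z = -g$. One small point: your fallback clause ``or, at worst, nonnegative by Corollary~\ref{cor:non_negative}; in any case $\Phi_z$ afterwards is $\ge 0$, so $\Delta\Phi_z \le -g$'' has the inequality the wrong way around (nonnegativity of $\Phi_z^{\text{after}}$ would give $\Delta\Phi_z \ge -g$, not $\le -g$), but this is harmless because your primary argument already establishes that after $\fetch(z)$ we always have $p(z)=0\le p^*(z)+\fk$ and $b(z)=0$, hence $\Phi_z^{\text{after}}=0$ exactly and $\Delta\Phi_z=-g$; the hedge is simply unnecessary and should be dropped.
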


\begin{proof}
First, we estimate $\Delta \LM$ itself due to $\fetch(z)$. Recall that the
procedure $\fetch$ creates a~sequence of random elements $a_0, a_1, \dots,
a_{p(z)-1}$, where $a_i \in S_i$ and moves each $a_i$ from chunk $S_i$ to
$S_{i+1}$. Furthermore, $z$ is moved from chunk $S_{p(z)}$ to $S_0$. Thus, the
associated cost is 
\begin{equation}
\label{eq:fetch_alg}
\begin{split}
\Delta \LM 
    & \textstyle = \max\{2^{p(z)}, 2^0\} + \sum_{i=0}^{p(z)-1} \max\{2^i, 2^{i+1}\} \\
    & \textstyle = 2^{p(z)} + \sum_{i=1}^{p(z)} 2^i
    < 3 \cdot 2^{p(z)}. 
\end{split}
\end{equation}

It remains to analyze the potential change for the moved elements: $a_0, a_1,
\dots, a_{p(z)-1}$, and $z$. The potential of $z$ before the movement is equal to $g$
by the lemma assumption. By the definition, the potential of $z$ after the 
movement is $\fa \cdot b(z)$, which is equal to $0$ as the budget of $z$
is reset to $0$ within $\fetch(z)$ operation. Thus, 
\begin{equation}
\label{eq:fetch_delta_phi_z}
    \Delta \Phi_z = -g.
\end{equation}

Finally, by Lemma~\ref{lem:elem_upgraded},
$\E[\Delta \Phi_{a_i}] \leq 4 \cdot 2^i$ for any $i \in [p(z)]$.
Combining that with \eqref{eq:fetch_alg} and \eqref{eq:fetch_delta_phi_z}, and using linearity
of expectation yields
\begin{align*}
    \E[\Delta \LM + \Delta \Phi] 
    & \textstyle = \Delta \LM + \E[\Delta \Phi_z] + \sum_{i=0}^{p(z)-1} \E[\Delta\Phi_{a_i}] \\
    & \textstyle < 3 \cdot 2^{p(z)} - g + \sum_{i=0}^{p(z)-1} 4 \cdot 2^i \\
    & \textstyle < 7 \cdot 2^{p(z)} - g.
\hfill \qedhere
\end{align*}
\end{proof}


\subsection{Amortized cost of LMA}
\label{sec:alg_moves}

Now we may upper bound the total amortized cost of \LM in a single step. 
We split this cost into parts incurred by Lines \ref{line:first}--\ref{line:budget_increase}
and Lines~\ref{line:last_line_2}--\ref{line:last_line_3}.

\begin{lemma}
\label{lem:amortized_cost_56}
Whenever \LM executes Lines~\ref{line:last_line_2}--\ref{line:last_line_3}
of Algorithm~\ref{alg:alg}, it holds that $\E[\Delta \LM + \Delta \Phi] \leq 0$.
\end{lemma}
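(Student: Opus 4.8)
The while loop in Lines~\ref{line:last_line_2}--\ref{line:last_line_3} consists of a sequence of $\fetch(z)$ operations, each triggered by the existence of some element $z$ with an uncontrolled budget, i.e. $b(z) \geq 2^{p(z)}$. The plan is to analyze a single such $\fetch(z)$ and show that its amortized cost $\E[\Delta\LM + \Delta\Phi]$ is nonpositive; summing over all iterations of the loop then gives the claim. The main tool is Lemma~\ref{lem:fetch_cost}, which already tells us that $\E[\Delta\LM + \Delta\Phi] \leq 7\cdot 2^{p(z)} - g$, where $g = \Phi_z$ immediately before the $\fetch$. So the whole content of the lemma reduces to showing $g \geq 7\cdot 2^{p(z)}$ whenever $\fetch(z)$ is invoked by the while loop.

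\textbf{Key step: lower-bounding the potential of the triggering element.} When the while loop calls $\fetch(z)$, we know $b(z) \geq 2^{p(z)}$. I would split into the two cases of the potential definition. If $p(z) \leq p^*(z) + \fk$, then $\Phi_z = \fa \cdot b(z) \geq \fa \cdot 2^{p(z)} = 7\cdot 2^{p(z)}$ since $\fa = 7$ (using $\fa \geq 7$ from Fact~\ref{fact:constants}); this already gives $g \geq 7\cdot 2^{p(z)}$. If instead $p(z) \geq p^*(z) + \fk + 1$, then $\Phi_z = \fb\cdot 2^{p(z)} - \fc\cdot b(z)$; here I would use the budget invariant (Observation~\ref{obs:budget_invariant}) $b(z) \leq 2\cdot 2^{p(z)}$ to get $\Phi_z \geq \fb\cdot 2^{p(z)} - 2\fc\cdot 2^{p(z)} = (\fb - 2\fc)\cdot 2^{p(z)}$, and then check from Fact~\ref{fact:constants} that $\fb - 2\fc \geq 7$ — indeed $\fb \geq \fa(r-1) + 2\fc + 8 \geq 2\fc + 8 > 2\fc + 7$. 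In both cases $g \geq 7\cdot 2^{p(z)}$, so Lemma~\ref{lem:fetch_cost} yields $\E[\Delta\LM + \Delta\Phi] \leq 7\cdot 2^{p(z)} - g \leq 0$ for this single $\fetch$.

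\textbf{Assembling.} One subtlety I would be careful about: after the first $\fetch$ in the loop, the partitioning and budgets have changed, so the next triggering element $z'$ (and its $p(z')$, $b(z')$) are evaluated with respect to the updated state — but the argument above is self-contained per $\fetch$ invocation (it only uses the condition $b(z)\geq 2^{p(z)}$ at the moment of invocation, the budget invariant, which Observation~\ref{obs:budget_invariant} guarantees holds at all times, and Lemma~\ref{lem:fetch_cost}, which is stated for an arbitrary $\fetch$ call). Also, Lemma~\ref{lem:fetch_cost}'s bound is in expectation over the random choices inside that $\fetch$; chaining the per-$\fetch$ bounds via the tower property / linearity of expectation over the (finite, by the termination argument) sequence of loop iterations gives $\E[\Delta\LM + \Delta\Phi] \leq 0$ for the entire execution of Lines~\ref{line:last_line_2}--\ref{line:last_line_3}.

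\textbf{Main obstacle.} There is no serious obstacle: the heavy lifting (the cost/potential bookkeeping inside $\fetch$, including the delicate expectation over the random evictees) is already done in Lemmas~\ref{lem:elem_upgraded} and~\ref{lem:fetch_cost}. The only thing to get right is the case analysis lower-bounding $g$, and in particular confirming via Fact~\ref{fact:constants} that the constants were chosen so that both $\fa \geq 7$ and $\fb - 2\fc \geq 7$; this is precisely why the threshold in Lemma~\ref{lem:fetch_cost} is $7\cdot 2^{p(z)}$ and why the while loop uses the trigger condition $b(z) \geq 2^{p(z)}$ rather than something weaker.
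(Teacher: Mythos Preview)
Your proposal is correct and matches the paper's proof essentially line for line: the paper also fixes the element $z$ triggering the while loop, uses $b(z)\geq 2^{p(z)}$ together with Observation~\ref{obs:budget_invariant} to lower-bound $\Phi_z$ by $\min\{\fb-2\fc,\fa\}\cdot 2^{p(z)}\geq 7\cdot 2^{p(z)}$ via Fact~\ref{fact:constants}, and then invokes Lemma~\ref{lem:fetch_cost}. Your explicit case split and the remarks on chaining expectations across loop iterations are a bit more verbose than the paper's treatment but add no new ideas.
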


\begin{proof}
Let $z$ be the element moved in Line~\ref{line:last_line_3}. 
Line~\ref{line:last_line_2} ensures 
that $b(z) \geq 2^{p(z)}$. 
Furthermore, $b(z) \leq 2 \cdot 2^{p(z)}$ by Observation~\ref{obs:budget_invariant}.
Let $\Phi_z$ be the value of the potential right before operation $\fetch(z)$ is 
executed in Line~\ref{line:last_line_3}. 
By the potential definition,
\begin{align*}
    \Phi_z 
    & \geq \min \{ \fb \cdot 2^{p(z)} - \fc \cdot b(z) \,, 
    \fa \cdot b(z) \} \\
    & \geq \min \{ \fb - 2 \cdot \fc  \,,
    \fa \} \cdot 2^{p(z)}\\
    & \geq 7 \cdot 2^{p(z)} && \text{(by Fact~\ref{fact:constants})}.
\end{align*}
By Lemma~\ref{lem:fetch_cost},
$\E[\Delta \LM + \Delta \Phi] \leq 7 \cdot 2^{p(z)} - \Phi_z \leq 0$.
\end{proof}

\begin{lemma}
\label{lem:amortized_cost}
Fix any step and consider its first part, where \LM 
pays for its access and movement costs, whereas \OPT pays for 
its access cost. Then, 
$\E[\Delta \LM + \Delta \Phi] \leq 
(\fa \cdot (r-1) + 8) \cdot 2^\fk \cdot \Delta \OPT = 
O(r^2) \cdot \Delta \OPT$.
\end{lemma}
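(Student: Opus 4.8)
The plan is an amortized argument that bounds $\E[\Delta\LM + \Delta\Phi]$ by decomposing the first part of the step into the four sub-events that Algorithm~\ref{alg:alg} performs in order: (1)~paying the access cost $2^{p(x)}$; (2)~executing $\fetch(x)$; (3)~performing the at most $r-1$ budget increases $b(y_i) \gets b(y_i) + 2^{p(x)}$; and (4)~running the while loop. Sub-event~(1) contributes exactly $2^{p(x)}$ to $\Delta\LM$ and nothing to $\Delta\Phi$. For~(2), Lemma~\ref{lem:fetch_cost} gives a contribution of at most $7\cdot 2^{p(x)} - g$, where $g=\Phi_x$ at the start of the step. For~(4), Lemma~\ref{lem:amortized_cost_56} gives a contribution of at most $0$. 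For~(3), observe that a budget increase changes neither $p(y_i)$ nor $p^*(y_i)$, hence it does not change which branch of the potential definition applies to $y_i$; since $\Phi_{y_i}$ is affine in $b(y_i)$ within each branch, the contribution of this increase to $\Delta\Phi$ is $\fa\cdot 2^{p(x)}$ if $p(y_i)\le p^*(y_i)+\fk$ and $-\fc\cdot 2^{p(x)}$ otherwise. In particular each such term is at most $\fa\cdot 2^{p(x)}$, and it is strongly negative precisely when $y_i$ lies in the ``high'' branch. Thus the whole estimate reduces to comparing $2^{p(x)}$ with $\Delta\OPT$.

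Let $z^*\in R_t$ minimize $p^*$, so that $\Delta\OPT=2^{p^*(z^*)}$, and recall $p(x)\le p(z^*)$ since $x$ minimizes $p$ over $R_t$. In \emph{Case~A}, where $p(z^*)\le p^*(z^*)+\fk$, we have $2^{p(x)}\le 2^{p(z^*)}\le 2^\fk\cdot\Delta\OPT$; bounding~(2) by $7\cdot 2^{p(x)}$ (using $g\ge 0$ from Corollary~\ref{cor:non_negative}) and each of the at most $r-1$ terms of~(3) by $\fa\cdot 2^{p(x)}$, we obtain $\E[\Delta\LM+\Delta\Phi]\le (\fa\cdot(r-1)+8)\cdot 2^{p(x)}\le(\fa\cdot(r-1)+8)\cdot 2^\fk\cdot\Delta\OPT$, which is $O(r^2)\cdot\Delta\OPT$ because $2^\fk<2\fb=O(r)$.

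\emph{Case~B}, where $p(z^*)\ge p^*(z^*)+\fk+1$, is the crux. Here $z^*$ lies in the high branch of the potential, and this remains true after $\fetch(x)$ because $\fetch(x)$ never decreases the chunk index of an element other than $x$. If $z^*=x$, then $g=\Phi_x=\fb\cdot 2^{p(x)}-\fc\cdot b(x)\ge(\fb-2\fc)\cdot 2^{p(x)}$ by Observation~\ref{obs:budget_invariant}; feeding this into the bound for~(2) and bounding~(3) by $\fa\cdot(r-1)\cdot 2^{p(x)}$ yields $\E[\Delta\LM+\Delta\Phi]\le(8+\fa\cdot(r-1)+2\fc-\fb)\cdot 2^{p(x)}\le 0$ by Fact~\ref{fact:constants}. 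If $z^*\ne x$, then $z^*$ is one of the $y_i$, its budget increase in Line~\ref{line:budget_increase} contributes $-\fc\cdot 2^{p(x)}$, the remaining at most $r-2$ budget increases contribute at most $\fa\cdot 2^{p(x)}$ each, and bounding~(2) by $7\cdot 2^{p(x)}$ gives $\E[\Delta\LM+\Delta\Phi]\le(8+\fa\cdot(r-2)-\fc)\cdot 2^{p(x)}\le 0$, again by Fact~\ref{fact:constants}. In both subcases the required bound holds trivially, since $0\le(\fa\cdot(r-1)+8)\cdot 2^\fk\cdot\Delta\OPT$.

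The main obstacle is exactly Case~B: the access cost $2^{p(x)}$ can be arbitrarily large relative to $\Delta\OPT$, and what rescues the bound is that in this regime the offline-cheap element $z^*$ is ``heavy'' in \LM's potential, so that either $\fetch(x)$ resets $x$'s large potential to zero (when $z^*=x$) or the forced budget increase shrinks $\Phi_{z^*}$ (when $z^*\ne x$) --- in both cases by enough to absorb the access and fetch costs. The three inequalities of Fact~\ref{fact:constants} are calibrated precisely so that these cancellations, together with the factor-$\fa$ loss incurred per budget increase, work out.
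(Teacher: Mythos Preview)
Your proof is correct and follows essentially the same approach as the paper: decompose the step into access, $\fetch(x)$, budget increases, and the while loop; bound each piece using Lemma~\ref{lem:fetch_cost}, Lemma~\ref{lem:amortized_cost_56}, and Corollary~\ref{cor:non_negative}; then split into an ``easy'' case where $2^{p(x)}$ is within a factor $2^\fk$ of $\Delta\OPT$, and a ``hard'' case handled by the two subcases $z^*=x$ and $z^*=y_j$ with the same applications of Fact~\ref{fact:constants}. The only cosmetic difference is that the paper splits cases on $p(x)$ versus $p^*(z^*)+\fk$, whereas you split on $p(z^*)$ versus $p^*(z^*)+\fk$; since $p(x)\le p(z^*)$, your Case~B contains the paper's Case~2, and your argument for Case~B goes through just as well on the larger set.
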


\begin{proof}
Let $R = \{x, y_0, \dots, y_{q-2} \}$ be the requested set, where $q \leq r$ 
and $p(x) \leq p(y_i)$ for any $i \in [q-1]$. 
Let $\Phi_x, \Phi_{y_0}, \dots, \Phi_{y_{q-2}}$ be the potentials of elements from $R$
just before the request.

It suffices to analyze the amortized cost of \LM in Lines~\ref{line:first}--\ref{line:budget_increase},
as the cost in the subsequent lines is at most~$0$ by Lemma~\ref{lem:amortized_cost_56}.
The access cost paid by \LM is $2^{p(x)}$ and by Lemma~\ref{lem:fetch_cost}, 
the amortized cost of $\fetch(x)$ is
$7 \cdot 2^{p(x)} - \Phi_x$.
Therefore,
\begin{equation}
\label{eq:alg_cost_1}
    \textstyle \E[\ALG + \Delta \Phi]
    = 8 \cdot 2^{p(x)} - \Phi_x + \sum_{i \in [q-1]} \Delta \Phi_{y_i} .
\end{equation}
As $b(y_i)$ grows by $2^{p(x)}$ for any $i \in [q-1]$, 
\begin{align}
    \label{eq:alg_cost_2}
    \Delta \Phi_{y_i} \leq \fa \cdot 2^{p(x)} && \text{for any $i \in [q-1]$}.
\end{align}
Finally, by Corollary~\ref{cor:non_negative},
\begin{equation}
    \label{eq:alg_cost_3}
    \Phi_x \geq 0.
\end{equation}

Let $w \in R$ be the element with the smallest chunk index in the solution of \OPT. 
That is, $\Delta \OPT = 2^{p^*(w)}$.

Assume first that $p(x) \leq p^*(w) + \fk$. By \eqref{eq:alg_cost_1},
\eqref{eq:alg_cost_2}, and~\eqref{eq:alg_cost_3}, $\E[\Delta \LM + \Delta \Phi]
\leq (8 + \fa \cdot (q-1)) \cdot 2^{p(x)} \leq (8 + \fa \cdot (r-1)) \cdot 2^\fk
\cdot \Delta \OPT$, and thus the lemma follows. 

Therefore, in the remaining part of the proof, we assume that 
$p(x) \geq p^*(w)+\fk+1$ and we show that, in such case, 
$\E[\Delta \LM + \Delta \Phi] \leq 0$.
We consider two cases. 
\begin{itemize}
\item If $w = x$, we may use a stronger lower bound on $\Phi_x$, i.e.,
    $\Phi_x = \fb \cdot 2^{p(x)} - \fc \cdot b(x) \geq (\fb - 2 \fc) \cdot 2^{p(x)}$
    (cf.~Observation~\ref{obs:budget_invariant}).
    Together with \eqref{eq:alg_cost_1} and
    \eqref{eq:alg_cost_2}, this yields
    $\E[\Delta \LM + \Delta \Phi] \leq (8 + \fa \cdot (q-1) - \fb + 2 \fc) 
    \cdot 2^{p(x)}$.
\item If $w = y_j$ for some $j \in [q-1]$, then 
    as $p(y_j) \geq p(x) \geq p^*(y_j)+\fk+1$, we
    may use a stronger upper bound on $\Delta \Phi_{y_j}$, namely
    $\Delta \Phi_{y_j} \leq - \fc \cdot 2^{p(x)}$.
    Together with 
    \eqref{eq:alg_cost_1}, \eqref{eq:alg_cost_2} (for $i \neq j$) and~\eqref{eq:alg_cost_3},
    this yields 
    $\E[\Delta \LM + \Delta \Phi] \leq (8 +\fa \cdot (q-2) - \fc) \cdot 2^{p(x)}$.
\end{itemize}
In either case, Fact~\ref{fact:constants} together with $q \leq r$ ensures
that $\E[\Delta \LM + \Delta \Phi] \leq 0$.
\end{proof}


\subsection{Movement of OPT}
\label{sec:opt_moves}

\begin{lemma}
\label{lem:delta_phi_when_opt_moves}
Fix any step and consider its second part, where \LM does nothing, 
whereas \OPT moves elements and pays for their movement. 
Then $\Delta \LM + \Delta \Phi \leq (2 \fa + 2 \fc + \fb) \cdot 2^\fk \cdot \Delta \OPT 
= O(r^2) \cdot \Delta \OPT$.
\end{lemma}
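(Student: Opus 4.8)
The plan is to bound $\Delta\Phi$ by considering the effect on the potential of \OPT moving a single element, and then sum over all element movements in this step. Since \OPT can move many elements, but we charge each movement its full cost, it suffices (by linearity of $\Phi=\sum_z\Phi_z$ and additivity of $\Delta\OPT$ over individual swaps/moves) to fix one element $z$ that \OPT moves from chunk $S^*_i$ to $S^*_{i'}$, show that the resulting change $\Delta\Phi_z$ plus the induced changes to other elements' potentials is at most $(2\fa+2\fc+\fb)\cdot 2^\fk\cdot 2^{\max\{i,i'\}}$, and note $\Delta\OPT$ for that single move is $2^{\max\{i,i'\}}$ (consistent with the movement-cost definition in the EC problem). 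Also $\Delta\LM=0$ here since \LM does nothing.

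The key observation is that \LM's partitioning $p$ and all budgets $b(z)$ are frozen during the second part of the step, so the only way $\Phi_z$ can change is through the change of $p^*(z)$, which shifts which branch of the piecewise definition applies to $z$ (the branches are determined by comparing $p(z)$ against $p^*(z)+\fk$ and $p^*(z)+\fk+1$). First I would observe that if \OPT's move of $z$ does not cross the threshold — i.e., $z$ stays in the ``low'' branch $\Phi_z=\fa\cdot b(z)$ both before and after, or stays in the ``high'' branch $\Phi_z=\fb\cdot 2^{p(z)}-\fc\cdot b(z)$ both before and after — then $\Phi_z$ is completely unchanged, since that expression does not depend on $p^*(z)$. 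Hence the only element whose potential changes is one whose branch flips. When \OPT \emph{increases} $p^*(z)$ (moves $z$ to a larger chunk), this can only push $z$ from the high branch toward the low branch; when \OPT \emph{decreases} $p^*(z)$, it can push $z$ from low toward high. In each case I would bound the magnitude of $|\Delta\Phi_z|$ using Observation~\ref{obs:budget_invariant} ($b(z)\le 2\cdot 2^{p(z)}$) and the fact that the flip can only happen if $p(z)$ is within an additive $O(\fk)$ of $p^*(z)$, hence $2^{p(z)}\le 2^\fk\cdot 2^{p^*(z)}\le 2^\fk\cdot 2^{\max\{i,i'\}}$. A flip from low to high increases $\Phi_z$ by at most $\fb\cdot 2^{p(z)}$ (dropping the nonnegative $\fa b(z)$ term and the nonnegative $-\fc b(z)+\fb 2^{p(z)}$ contributions appropriately), while a flip from high to low changes $\Phi_z$ by $\fa b(z)-(\fb 2^{p(z)}-\fc b(z))$, whose absolute value is at most $(\fa+\fc)b(z)+\fb 2^{p(z)}\le (2\fa+2\fc+\fb)\cdot 2^{p(z)}$. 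Combining these, $\Delta\Phi\le(2\fa+2\fc+\fb)\cdot 2^{p(z)}\le(2\fa+2\fc+\fb)\cdot 2^\fk\cdot\Delta\OPT$, as claimed; and $(2\fa+2\fc+\fb)\cdot 2^\fk=O(r)\cdot O(r)=O(r^2)$ by the definitions of the constants.

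The main obstacle I anticipate is getting the bookkeeping exactly right for a single \OPT move that changes $p^*(z)$ by more than one chunk index at once, and making sure the threshold comparison is handled with the correct off-by-one: the potential definition has a ``gap'' region (values of $p(z)$ with $p^*(z)+\fk < p(z) < p^*(z)+\fk+1$, which is empty since indices are integers, so actually the two cases $p(z)\le p^*(z)+\fk$ and $p(z)\ge p^*(z)+\fk+1$ partition all possibilities). I need to confirm that this partition is genuinely exhaustive and that $\Phi_z$ is well-defined throughout, then verify that ``branch unchanged $\Rightarrow$ $\Phi_z$ unchanged'' really holds — this is immediate since neither branch expression mentions $p^*$. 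A secondary subtlety is that \OPT may move several elements in one step, including possibly moving elements through a sequence of intermediate chunks; but since we only need an \emph{upper} bound on $\Delta\Phi$ and we charge \OPT the full movement cost $2^{\max\{\size\}}$ for each moved element, summing the per-element bound over all moved elements and over the cycle decomposition of \OPT's reconfiguration yields the stated inequality. I would present the argument for one element and then remark that summation gives the general case.
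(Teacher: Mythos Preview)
Your proposal is correct and follows essentially the same approach as the paper's proof: reduce to a single element $z$ moved by \OPT, observe that $\Phi_z$ can change only when the branch of the piecewise definition flips, derive $p(z)\le \fk+\max\{i,i'\}$ from the flip condition, and bound $|\Delta\Phi_z|$ via Observation~\ref{obs:budget_invariant} to get $(2\fa+2\fc+\fb)\cdot 2^{p(z)}\le(2\fa+2\fc+\fb)\cdot 2^\fk\cdot\Delta\OPT$. The paper's write-up is slightly more streamlined (it handles both flip directions at once via the absolute value $|\fa\,b(z)-\fb\,2^{p(z)}+\fc\,b(z)|$ rather than casing on the direction), and it does not invoke any cycle decomposition of \OPT's reconfiguration---that extra structure is unnecessary here, since summing the per-element bound over all moved elements already suffices.
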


\begin{proof}
We focus on a single element $z$ moved by \OPT. 
Assume that \OPT changes its chunk
index~$p^*(z)$ from $a$ to $a+d$ (where $d$ is possibly negative).

The only element whose potential might be affected is $z$ itself. 
The definition of $\Phi_z$ has two cases, depending on whether 
the relation $p(z) \leq p^*(z) + k$ holds. If this relation remains untouched by the movement, 
then $\Phi_z$ remains constant. On the other hand, the relation changes
only in one of the following cases.
\begin{itemize}
\item $d$ is positive, $a \leq p(z) - \fk-1$, and $a + d \geq p(z) - \fk$;
\item $d$ is negative, $a \geq p(z) - \fk$, and $a + d \leq p(z) - \fk - 1$.
\end{itemize}
In the first case, $p(z) \leq \fk + a + d$, while in the second case 
$p(z) \leq \fk + a$. Thus, in either case, $p(z) \leq \fk + \max\{a, a + d\}$.
We obtain
\begin{align*}
    \Delta \Phi_z 
    & \leq | \fa \cdot b(z) - \fb \cdot 2^{p(z)} + \fc \cdot b(z) | \\
    & \leq (2 \fa + 2 \fc + \fb) \cdot 2^{p(z)} \\
    & \leq (2 \fa + 2 \fc + \fb) \cdot 2^\fk \cdot \max\{2^a, 2^{a+d} \} ,
\end{align*}
where the second inequality is implied by Observation~\ref{obs:budget_invariant}.
Note that the cost of \OPT associated with moving $z$ is $\max \{2^a, 2^{a+d} \}$
by the definition of the EC problem. 
Summing over all elements moved by \OPT immediately yields 
$\Delta \Phi \leq (2 \fa + 2 \fc + \fb) \cdot 2^\fk \cdot \Delta \OPT$.
The lemma follows as $\Delta \LM = 0$ in the second part of a step.
\end{proof}


\subsection{Competitiveness}

\begin{theorem}
$\LM$ is $O(r^2)$-competitive for the Exponential Caching problem, even against 
adaptive-online adversaries.
\end{theorem}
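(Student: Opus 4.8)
The plan is to run a standard amortized-analysis argument that simply stitches together the per-step bounds already proved. Recall the potential $\Phi = \sum_{z \in U}\Phi_z$, a function of LM's configuration (the partitioning $p$ and the budgets $b(\cdot)$) together with OPT's partitioning $p^*$; by Corollary~\ref{cor:non_negative} it is always non-negative, and since at the start of the input $p \equiv p^*$ and all budgets are $0$, we have $\Phi = 0$ initially. (Even without assuming that OPT starts from $p_0$, the initial potential is at most $\fb \cdot \sum_{i \in [w]} 2^{2i} = O(\fb \cdot n^2)$, which is an admissible additive constant $\xi$.) Using the split of each step into two parts introduced earlier, Lemma~\ref{lem:amortized_cost} bounds the amortized cost of the first part by $(\fa\cdot(r-1)+8)\cdot 2^\fk \cdot \Delta\OPT$, and Lemma~\ref{lem:delta_phi_when_opt_moves} bounds the amortized cost of the second part by $(2\fa + 2\fc + \fb)\cdot 2^\fk \cdot \Delta\OPT$ (there $\Delta\LM = 0$). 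Since $\fa,\fb,\fc = O(r)$ and $2^\fk < 2\fb = O(r)$ by Fact~\ref{fact:constants}, both coefficients are $O(r^2)$; let $C = O(r^2)$ be an upper bound on both.

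Summing these two inequalities over the two parts of every step, the potential terms telescope, so $\sum(\Delta\LM + \Delta\Phi) = \LM(\I) + \Phi_{\mathrm{fin}} - \Phi_{\mathrm{init}}$ with $\Phi_{\mathrm{fin}} \ge 0$ and $\Phi_{\mathrm{init}} = 0$. Taking expectations (in the oblivious case $\I$, and hence $\OPT(\I)$, is fixed, so linearity of expectation applies term by term) gives $\E[\LM(\I)] \le \E[\LM(\I)] + \E[\Phi_{\mathrm{fin}}] - \Phi_{\mathrm{init}} = \E[\sum(\Delta\LM + \Delta\Phi)] \le C\cdot\OPT(\I)$, i.e. $\E[\LM(\I)] = O(r^2)\cdot\OPT(\I)$, which is $O(r^2)$-competitiveness against oblivious adversaries.

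For the strengthening to adaptive-online adversaries, the key point is that offline optimality of OPT is never used in Lemmas~\ref{lem:elem_upgraded}--\ref{lem:delta_phi_when_opt_moves}: those bounds hold for an arbitrary competitor processing the same request sequence element by element, so we may instantiate ``OPT'' as the adaptive-online adversary's own serving strategy, whose cost we denote $\mathrm{Adv}(\I)$. The only randomness LM uses during a step are the fresh uniform draws inside the \fetch\ calls, and every per-event bound above is valid conditioned on LM's current configuration --- this is stated explicitly in Lemma~\ref{lem:elem_upgraded} and inherited by Lemmas~\ref{lem:fetch_cost}, \ref{lem:amortized_cost_56}, and~\ref{lem:amortized_cost}. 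Hence, conditioning on the whole history $\mathcal{H}$ up to and including the adversarially chosen request $R_t$, we still have $\E[\Delta\LM + \Delta\Phi \mid \mathcal{H}] \le C\cdot\Delta\OPT$; chaining these conditional bounds via the tower rule exactly as before yields $\E[\LM(\I)] \le C\cdot \mathrm{Adv}(\I) + \xi$ for every adaptive-online adversary, which is $O(r^2)$-competitiveness against such adversaries. (Via the reduction of Ben-David et al.~\cite{BeBKTW94} this also implies the existence of a deterministic $O(r^4)$-competitive algorithm.)

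The only part that needs real care is the adaptive-online claim, not the arithmetic: once Lemmas~\ref{lem:amortized_cost} and~\ref{lem:delta_phi_when_opt_moves} are in hand, the rest is bookkeeping. The subtlety is that the expectations in the amortized bounds must survive being taken conditionally on LM's random state at the moment the adversary selects the next request --- which is exactly why Lemma~\ref{lem:elem_upgraded}, the lone place where randomness enters the analysis, was phrased to hold conditioned on LM's current partitioning.
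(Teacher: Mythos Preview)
Your proof is correct and follows essentially the same approach as the paper: telescope the per-step amortized bounds from Lemmas~\ref{lem:amortized_cost} and~\ref{lem:delta_phi_when_opt_moves}, use $\Phi_{\mathrm{init}}=0$ and $\Phi_{\mathrm{fin}}\ge 0$ from Corollary~\ref{cor:non_negative}, and for the adaptive-online claim invoke that the only randomness is the fresh draws inside $\fetch$, whose analysis (Lemma~\ref{lem:elem_upgraded}) was already stated conditionally on \LM's current configuration. Your exposition is slightly more explicit than the paper's (spelling out the tower rule and that ``\OPT'' may be replaced by any online competitor), but the argument is the same.
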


\begin{proof}
Fix any input $\I$ and consider any step $t$. 
Let $\Phi^t$ denote the potential right after step $t$, and $\Phi^0$ 
be the initial potential. 
By Lemmas~\ref{lem:amortized_cost} and~\ref{lem:delta_phi_when_opt_moves},
\begin{align}
\label{eq:single_step}
    \E[\LM(\I,t) & + \Phi^t - \Phi^{t-1}] = O(r^2) \cdot \OPT(\I,t).
\end{align}

By summing \eqref{eq:single_step} over all $m$ steps of the input, we obtain
that $\E[\LM(\I)] + \E[\Phi^m] - \Phi^0 \leq O(r^2) \cdot \OPT(\I)$. As the
initial potentials of all elements are $0$ and the final potentials are
non-negative by Corollary~\ref{cor:non_negative}, $\E[\LM(\I)] \leq O(r^2) \cdot
\OPT(\I)$. 

We note that the only place where \LM uses randomness is in choosing a sequence of random 
elements in the procedure $\fetch$. As noted in its analysis (cf.~Lemma~\ref{lem:elem_upgraded}),
the bound on the expected amortized cost of \LM holds also conditioned on its current state,
and thus it holds even if the adversary chooses the requested set $R_t$ on the basis of the random 
bits of \LM used till step~$t-1$. 
\end{proof}

The result of Ben-David et al.~\shortcite{BeBKTW94} shows that the existence of a randomized algorithm
that is $c$-competitive against adaptive-online adversaries implies the existence 
of a $c^2$-competitive deterministic algorithm. 

\begin{corollary}
There exists an $O(r^4)$-competitive deterministic algorithm for the Exponential Caching problem.
\end{corollary}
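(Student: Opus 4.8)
The plan is to derive the statement directly from the preceding theorem --- that \LM is $O(r^2)$-competitive for the Exponential Caching problem against adaptive-online adversaries --- together with the generic derandomization result of Ben-David et al.~\shortcite{BeBKTW94}, used as a black box. Setting $c = O(r^2)$ to be the competitive ratio of \LM, that result converts any $c$-competitive randomized algorithm working against adaptive-online adversaries into a deterministic algorithm that is $c^2$-competitive; substituting $c = O(r^2)$ gives the claimed $O(r^4)$-competitive deterministic algorithm.

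The first step is to make explicit that \LM's guarantee is of the right type --- against adaptive-online, not merely oblivious, adversaries --- since this is the hypothesis of the Ben-David et al. theorem. This is precisely what was established in the proof of the previous theorem: the only randomness used by \LM is in the element choices inside \fetch, and the key amortized inequality $\E[\Delta \LM + \Delta \Phi] \le O(r^2)\cdot\Delta\OPT$ was shown to hold even conditioned on \LM's current state (Lemma~\ref{lem:elem_upgraded}), so it remains valid when the request $R_t$ is chosen by an adversary who has observed all random bits used through step $t-1$. Summing over steps and invoking $\Phi \ge 0$ (Corollary~\ref{cor:non_negative}) then yields $O(r^2)$-competitiveness in the adaptive-online sense.

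The second step is simply to quote and apply the Ben-David et al. transformation. One may also recall its mechanism: an adaptive-online $c$-competitive randomized algorithm is in particular oblivious $c$-competitive, composing the two guarantees gives a randomized algorithm $c^2$-competitive against adaptive-offline adversaries, and any such algorithm can be replaced by an equally competitive deterministic one. The only point needing (slight) care is verifying that the Exponential Caching problem fits the online cost-minimization formalism for which this machinery is stated --- the configuration space of valid partitionings of the finite universe $U$ is finite, each step consists of a request answered by a new partitioning, and the total cost is the nonnegative sum of access and movement terms --- but since the paper's definition of competitiveness already allows an additive constant $\xi$, this is immediate and the black-box application goes through verbatim. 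That verification is the ``main obstacle,'' and it is a very mild one.
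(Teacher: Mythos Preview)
Your proposal is correct and follows exactly the paper's approach: the corollary is obtained by applying the Ben-David et al.\ derandomization theorem as a black box to the $O(r^2)$-competitive guarantee of \LM against adaptive-online adversaries, yielding $c^2 = O(r^4)$. The additional justification you give (that \LM's guarantee is indeed against adaptive-online adversaries and that EC fits the required framework) is not spelled out in the paper but is unobjectionable.
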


Finally combining the results for Exponential Caching with Theorem~\ref{thm:ec_to_mssc}
immediately gives improved guarantees for the online Min-Sum Set Cover problem. 

\begin{theorem}
There exist a randomized $O(r^2)$-competitive algorithm and a deterministic
$O(r^4)$-competitive algorithm for the Min-Sum Set Cover problem.
\end{theorem}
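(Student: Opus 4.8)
The plan is to obtain the final theorem as an immediate corollary of the machinery already assembled in the excerpt, so the real work has been front-loaded into the reduction (Theorem~\ref{thm:ec_to_mssc}) and the analysis of \LM for Exponential Caching. First I would invoke the theorem that \LM is $O(r^2)$-competitive for the EC problem, even against adaptive-online adversaries; this is the substantive input and has already been proved via the potential-function argument (Lemmas~\ref{lem:amortized_cost} and~\ref{lem:delta_phi_when_opt_moves} combined in the competitiveness theorem). Then I would apply Theorem~\ref{thm:ec_to_mssc} with $c = O(r^2)$: since there exists a $c$-competitive randomized algorithm $\ALGH$ for EC, there exists an $O(c) = O(r^2)$-competitive randomized algorithm $\ALGS$ for MSSC. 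That settles the randomized half.

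For the deterministic half, I would first note that the EC competitiveness of \LM holds against adaptive-online adversaries, as the analysis only used that the expected amortized cost bound holds conditioned on the current state of \LM (the randomness is confined to the choice of random elements inside \fetch, and Lemma~\ref{lem:elem_upgraded} is stated to hold conditioned on the current partitioning). By the result of Ben-David et al.~\shortcite{BeBKTW94}, a randomized algorithm that is $c$-competitive against adaptive-online adversaries yields a deterministic $c^2$-competitive algorithm; with $c = O(r^2)$ this gives a deterministic $O(r^4)$-competitive algorithm for EC, as recorded in the corollary. Applying Theorem~\ref{thm:ec_to_mssc} once more — it transfers deterministic algorithms to deterministic algorithms with only a constant-factor loss — converts this into a deterministic $O(r^4)$-competitive algorithm for MSSC.

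The one point deserving a sentence of care is the chain of constant factors: Theorem~\ref{thm:ec_to_mssc} loses a factor of $48$ (from Lemmas~\ref{lem:ec_to_mssc_1}, \ref{lem:ec_to_mssc_3}, and~\ref{lem:ec_to_mssc_2}), and the Ben-David et al.\ transformation squares the ratio, so the deterministic MSSC bound is $48 \cdot (O(r^2))^2 = O(r^4)$; none of these steps interacts with $n$, so the final ratios genuinely depend only on $r$. I do not expect a real obstacle here — all the heavy lifting is in the preceding sections — but if anything needs emphasis it is verifying that the reduction of Theorem~\ref{thm:ec_to_mssc} is stated uniformly enough to be applied in both the randomized and the deterministic regime (it is, since the theorem statement explicitly says ``deterministic or randomized''). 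So the proof is essentially three lines: instantiate the EC result, push it through the reduction for the randomized claim, and push the squared/derandomized EC result through the reduction for the deterministic claim.

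\begin{proof}
By the competitiveness theorem for \LM, there is a randomized $O(r^2)$-competitive
algorithm for the Exponential Caching problem that works against adaptive-online
adversaries. Applying Theorem~\ref{thm:ec_to_mssc} with $c = O(r^2)$ yields a
randomized $O(r^2)$-competitive algorithm for the Min-Sum Set Cover problem.
For the deterministic statement, the adaptive-online guarantee together with the
result of Ben-David et al.~\shortcite{BeBKTW94} gives a deterministic
$O(r^4)$-competitive algorithm for Exponential Caching (this is the preceding
corollary); applying Theorem~\ref{thm:ec_to_mssc} again, now in the deterministic
case with $c = O(r^4)$, produces a deterministic $O(r^4)$-competitive algorithm
for Min-Sum Set Cover. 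In both cases the reduction loses only a constant factor,
so the stated ratios hold.
\end{proof}
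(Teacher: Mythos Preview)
Your proposal is correct and matches the paper's approach exactly: the paper also derives this theorem as an immediate combination of the $O(r^2)$-competitive randomized EC result, the Ben-David et al.\ derandomization corollary for EC, and Theorem~\ref{thm:ec_to_mssc}. There is nothing to add; the paper itself gives no explicit proof beyond the sentence ``combining the results for Exponential Caching with Theorem~\ref{thm:ec_to_mssc} immediately gives improved guarantees''.
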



\section{Conclusions}

In this paper, we studied the online Min-Sum Set Cover problem on a universe of
$n$ elements with requested sets of cardinalities at most $r$. We gave a first
(randomized) algorithm whose competitive ratio does not depend on $n$: our
algorithm is $O(r^2)$-competitive. While our construction implies also the
existence of $O(r^4)$-competitive deterministic solution, it is unknown how to
make it constructive and efficient.

Closing the gaps between our results and lower bounds is an intriguing open question:
while the deterministic lower bound is $\Omega(r)$, no super-constant 
randomized bounds are known.


\section*{Acknowledgements}

This paper was supported by Polish National Science Centre 
grant~2016/22/E/ST6/00499 and by the ERC CoG grant TUgbOAT no 772346.

\bibliography{ref}
\end{document}